\documentclass[11pt]{article}

\usepackage[english]{babel}
\usepackage[utf8x]{inputenc}
\usepackage[margin=1in]{geometry}
\usepackage{amsmath}
\usepackage{amsfonts}
\usepackage{soul}
\usepackage{amssymb}
\usepackage{amsthm}
\usepackage{rotating}
\usepackage{amsbsy}
\usepackage{graphicx,  ucs}
\usepackage{tikz}
\usepackage{listings}
\usepackage{enumitem}
\usepackage{hyperref}
\usepackage{multirow}
\usepackage{array}
\usepackage[compact]{titlesec}
\usepackage{cmap}
\usepackage[T1]{fontenc}
\usepackage{bm}
\usepackage{comment}

\def\compactify{\itemsep=0pt \topsep=0pt \partopsep=0pt \parsep=0pt}
\let\latexusecounter=\usecounter
\newenvironment{CompactEnumerate}
  {\def\usecounter{\compactify\latexusecounter}
   \begin{enumerate}}
  {\end{enumerate}\let\usecounter=\latexusecounter}
\newenvironment{Itemize}
{\begin{itemize}
\setlength{\itemsep}{0pt}
\setlength{\topsep}{0pt}
\setlength{\partopsep}{0 in}
\setlength{\parskip}{0 pt}}
{\end{itemize}}

\newcommand{\Domain}{\mathcal{D}}
\newcommand{\Support}{\mathcal{X}}

\newcommand{\Workers}{\mathcal{N}}
\newcommand{\Truth}{\mathcal{T}}
\newcommand{\xw}{\vec{x}_{\Workers}}
\newcommand{\xt}{\vec{x}_{\Truth}}

\newcommand{\Exp}{\mathbb{E}}

\newcommand{\com}[1]{}

\newtheorem{theorem}{Theorem}

\newtheorem{definition}{Definition}
\newtheorem{corollary}{Corollary}
\newtheorem{lemma}{Lemma}

\def\poly{\mathrm{poly}}
\def\eps{\varepsilon}

\def\Prob{\mathbb{P}}
\def\Exp{\mathbb{E}}

\def\Cond{\textsc{Cond}}

\def\NP{\textsc{NP}}

\newcommand{\reals}{\mathbb{R}}
\newcommand{\nats}{\mathbb{N}}

\def\abs#1{\left|#1\right|}

\renewcommand{\vec}{\mathbf}

\definecolor{vergreen}{RGB}{0,85,2}
\definecolor{myvergreen}{RGB}{0,140,3}
\definecolor{provorange}{RGB}{85,34,0}
\definecolor{inputblue}{RGB}{5,13,111}
\definecolor{noapred}{RGB}{116,3,3}
\definecolor{classesblue}{RGB}{9,49,146}

\definecolor{secinhead}{RGB}{249,196,95}

\definecolor{lgray}{gray}{0.8}

\begin{document}
\title{Certified Computation from Unreliable Datasets}

\author{Themis Gouleakis \thanks{tgoule@mit.edu}\\
MIT
\and Christos Tzamos \thanks{tzamos@mit.edu}\\
Microsoft Research
\and Manolis Zampetakis \thanks{mzampet@mit.edu}\\
  MIT
}
\date{}
\maketitle

\begin{abstract}
  A wide range of learning tasks require human input in labeling massive data.
  The collected data though are usually low quality and contain inaccuracies and
  errors. As a result, modern science and business face the problem of learning
  from unreliable data sets.

  In this work, we provide a generic approach that is based on
  \textit{verification} of only few records of the data set to guarantee high
  quality learning outcomes for various optimization objectives. Our method,
  identifies small sets of critical records and verifies their validity. We show
  that many problems only need $\poly(1/\eps)$ verifications, to ensure that the
  output of the computation is at most a factor of $(1 \pm \eps)$ away from the
  truth. For any given instance, we provide an \textit{instance optimal}
  solution that verifies the minimum possible number of records to approximately
  certify correctness. Then using this instance optimal formulation of the
  problem we prove our main result: "every function that satisfies some
  Lipschitz continuity condition can be certified with a small number of
  verifications". We show that the required Lipschitz continuity condition is
  satisfied even by some $\NP$-complete problems, which illustrates the
  generality and importance of this theorem.

  In case this certification step fails, an invalid record will be identified.
  Removing these records and repeating until success, guarantees that the result
  will be accurate and will depend only on the verified records. Surprisingly,
  as we show, for several computation tasks more efficient methods are possible.
  These methods always guarantee that the produced result is not affected by the
  invalid records, since any invalid record that affects the output will be
  detected and verified.
\end{abstract}

  \section{Introduction}
\label{s:intro}

  Modern science and business involve using large amounts of data to perform various computational or learning tasks. The data required by a
particular research group or enterprise usually contain \textit{errors and inaccuracies} because of the following reasons:
\begin{CompactEnumerate}
  \item[-] the validity of the data changes dynamically. For example data involving home locations of customers or employees are
           not constant over time. Hence a set of data collected in a particular time frame probably is not going to remain valid for the
           future,
  \item[-] the data might be provided by other entities or collected online from a source that has no certification for their validity. For
           example data collected from crowdsourcing environments.
\end{CompactEnumerate}

\noindent We call a set of data with the property that only a subset of them is valid an \textit{unreliable data set}. The goal of this paper
is to develop theoretically established methods that lead to \textit{certified computation} over such data sets. Towards this goal we
assume that we have the  ability to \textit{verify} the validity of a record in our data set. Usually this verification process is costly and hence
it doesn't make sense to verify all the records in our data set every time we want to compute a function on them. On the other hand if the
majority or the most important part of the data are invalid then trying to find a valid subset could lead to essentially verifying the entire data set. In our work we introduce the concept of \textit{learning with certification}, in which we can distinguish between the following scenarios
\begin{CompactEnumerate}
  \item the value of the function that we computed on the unreliable data set is close to the value of the function computed on the valid
           subset of our data set,
  \item there exists at least one invalid record that could dramatically charge the value of the function that we want to compute.
\end{CompactEnumerate}
by verifying only a small number of records.
\smallskip

\noindent \textbf{Computations in Crowdsourcing.} Crowdsourcing \cite{DoanRH2011} is a popular instantiation of an unreliable data set,
where records are provided by a very large number of workers. These workers may need to put significant effort to extract high quality data
and without the right incentives they might choose not to do so giving, as a result, very noisy and unreliable reports. Experimental evidence
\cite{KazaiKKM2011, VuurensVE2011, WaisLCFGLMS2010} suggests that There are a large number of examples where crowdsourcing fails in practice
because of the unreliability of the data that it produces. An anecdotal failure of crowdsourcing is the example of Walmart's mechanism that
made the famous rapper Pitbull travel to the remote island of Kodiak, Alaska, see e.g., \cite{Pitbull12}. In 2012, Walmart asked their
customers to vote, through Facebook, their favorite local store. The store with the most votes would host a promotional performance by Pitbull.
Probably as a mean joke, a handful of people organized an \#ExilePitbull campaign, inviting Facebook users to vote for the most remote Walmart
store, at Kodiak. The campaign went viral and Pitbull performed at Kodiak, in July 2012. While the objective of Walmart was to learn the
location that would maximize attendance to the concert, the resulting outcome was terribly off because the incentives of the workers were
misaligned.

  Our work is motivated by these observations and aims through the use of verification to provide a generic approach that
guarantees high quality learning outcomes. Verification can be implemented either directly, in tasks such as peer grading,
by having an expert regrade the assignment, or indirectly, e.g. in the Walmart example by verifying the locations of the
voters. The main challenge in our framework it to minimize such verifications since they can be very costly.

\subsection{Our Model and Results}

  A data set is a set of records $\Workers$. The set $\Workers$ may contain, apart from the valid subset of records $\Truth$, a set of records
$\Workers \setminus \Truth$ that are invalid due to reasons that we described earlier. But how much does the presence of these invalid records
affect the output of the computation? The answer to this question depends on the number but also on the \textit{importance} of the invalid
records, where the importance depends on the specific computation task that we want to run.

  In order to assess whether the computed output is accurate, we can verify the validity of some of the records. Our goal is to verify as few
reports as possible and eventually be confident that the output of the computation is accurate. At this point we have to define a measure to
evaluate the accuracy of an output. Ideally, an accurate output is the output that we would get if all the records were valid. Such a benchmark,
however, is impossible to achieve as the correct values of the invalid records are unobservable. We instead focus on a simpler benchmark. We want
to decide whether given an unreliable data set the output of the computation based on $\Workers$ is close to the output of the computation based
only on $\Truth$. That is, if we could see which records are invalid and perform the computation task after discarding them, would the output
of the computation be close to the current value?

\paragraph{Certification Schemes} A positive answer to the question above is called \textit{certification} of the computation task based on the
unreliable data set $\Workers$. A negative answer is a witness that at least one record in $\Workers$ is invalid. Our first goal of this paper is to provide certification schemes for general computation
tasks that verify only a small number of records and can distinguish between these two cases.

\begin{figure}[!h] \label{fig:immunity}
  \centering
  {\includegraphics[clip, trim=2cm 8cm 2cm 1cm,width=0.45\textwidth]{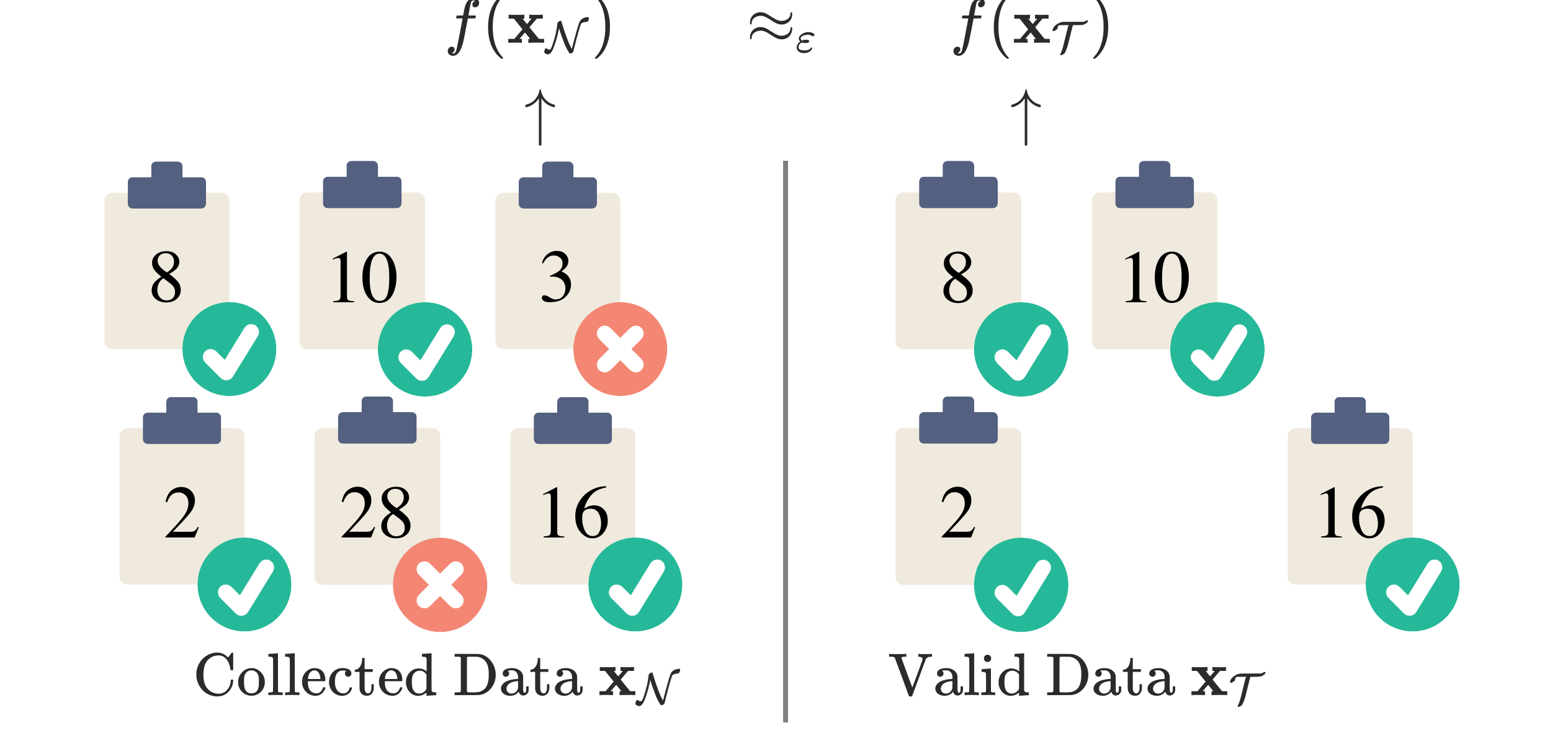}}
  \caption{The property that a certification scheme certifies.}
\end{figure}

As a toy example of these models let us consider the simple function $f(\xw) = \max_{i \in \Workers} x_i$, where we assume that each record is a
real number $x_i \in \reals$. For the certification task we want to check whether $f(\xw) = f(\xt)$ or not. This can be easily done by checking
whether the record $i^* = \arg \max_{i \in \Workers} x_i$ is valid or not.

Not all functions though have such efficient deterministic and exact correction schemes. For several functions, we can obtain randomized
certification schemes that succeed with high probability and certify that the output is close up to a multiplicative factor. Moreover, for some
functions it might not even be possible to efficiently certify them without certifying almost everything. One extreme such example is a threshold
function that is 1 if all records are valid and 0 otherwise, $\mathbb{I}_{\Workers=\Truth}$ where we cannot obtain any meaningful approximation
without verifying $\Omega(n)$ records of $\Workers$.

In Section~\ref{sec:certification}, we provide efficient certification schemes for many different functions. Our results are the following:
\begin{Itemize}
  \item[-] {\bf sum function.} We start by presenting a randomized scheme for certifying the sum of all records that uses only
           $O(\frac 1 \varepsilon)$ verifications to certify correctness up to a multiplicative factor $1 \pm \varepsilon$. This is a very useful
           primitive that can be used in several different tasks: For example for computing the average, we can compute and certify the total sum
           of records and divide by the total number of records which we can also certify as another summation task. Another example is the
           \emph{max-of-sums} function, where as in the Walmart example we presented earlier, agents vote on different categories and the goal is
           to compute the total category that has the maximum number (or sum) of valid records. This can be easily certified by computing the max
           of all sums of records and certifying that this sum is approximately correct.
  \item[-] {\bf functions given by linear programs.} We then study the set of functions  expressible as LPs where the input data
           corresponds to either variables or constraints of the LP. We show that for functions expressible as packing or covering LPs, only
           $O(\frac 1 \varepsilon)$ verifications to certify correctness up to a multiplicative factor $1 \pm \varepsilon$ while for more general
           LPs we provide a deterministic scheme that depends on the dimensionality (number of variables or constraints).
  \item[-] {\bf instance-optimal schemes.} To study more general functions, we devise a linear program that characterizes (up to a constant factor)
           for any given instance the minimum number of verifications needed for approximate certification. We show that even though optimal
           certification schemes may be arbitrarily complex, there are simple schemes that verify records independently that are almost-optimal.
  \item[-] {\bf Main theorem for certification.} Our main most general result is that we can provide explicit solutions to the instance optimal
           linear program for a large class of different objectives that satisfy a $w$-Lipschitz property. We illustrate the flexibility of this
           constraint by showing that even very complex functions that correspond to NP-hard problems satisfy the $w$-Lipschitz property.
           Specifically, using our general theorem, we prove this for the TSP problem and the Steiner tree problems where we show that the
           certification complexity is only $O(\frac 1 \varepsilon)$. These capture settings where agents report their locations in a metric space
           and the goal is to design an optimal tour that visits all of them (TSP) or connecting them in a network by minimizing total cost
           (Steiner tree).
\end{Itemize}

\paragraph{Correction Schemes} Although very useful, the certification process fails when at least one invalid record is found. Naturally the next
question to ask is how we can proceed in order to actually compute the value of the function that we are interested in by throwing away the invalid
records. In a worst case example where all records are invalid, we would need to verify all the agents to complete the correction task. To get a more
meaningful and realistic measure of the verification complexity of a correction task, we carefully define it in terms of a budget $B$. For verification
complexity $V$, we assume that the designer has an initial budget for verifications $B = V$ which decreases as he performs verifications but might
increase every time he finds an invalid record.
The rationale behind the increase is that  verifications of incorrect records lead to their removal from the data set, which makes it more accurate. Therefore, in this model we measure the number of verifications needed to correct one incorrect record.
We distinguish correction schemes
into two models depending on the budget increase.

In the \textit{weak correction} model, the budget increases by $V$ every time an invalid record is found. This means that finding an invalid record allows
us to restart the process from the beginning.

In the \textit{strong correction} model, the budget does not increase but does not decrease either. This means that verification of invalid records is
costless.

Of course, as we said, in the worst case a correction scheme has to verify all the data in the data set which is not realistic. However,  during the correction procedure for specific tasks it would be reasonable to have an upper bound on the number of invalid data that we are willing to verify before dismissing the entire data set for being too corrupted.
In particular, if the correction scheme
succeeds within the verification budgets we have an accurate output to our computation task. Otherwise, we can conclude that our data set is too corrupted and hence we need to collect the data from the beginning.

  Notice that in the example of the $\max$ function, if the certification fails then we can continue by checking the second largest record and so on
until we find a valid record which will give us the value $f(\xt)$ precisely. 
However, strong correction schemes are much harder to obtain than weak ones in general.

  If our computation or learning task has a deterministic certification scheme, e.g. the $\max$ function, it is easy to obtain weak correction
schemes by repeating the certification scheme until success. For randomized schemes though, one needs to be more careful as it is possible that errors
can accumulate. This is easy to fix by requiring that the certification scheme fails with probability at most $1/n$. However, this increases the total
weak-verification complexity by a logarithmic factor.

  In Section~\ref{sec:wcorr}, we prove our main result for weak correction schemes which implies that such an increase is not necessary and it is
possible to obtain weak correction schemes with the same complexity as the underlying certification scheme (up to constant factors). To do this, we run
the certification scheme many times and do not stop the first time it succeeds but continue until the total number of successes is more than the number
of failures. A random walk argument guarantees that this produces the correct answer with constant probability. If the objective function is not
monotone, additional care is needed to get the same guarantee.

  While weak correction schemes with good verification complexity exist for all tasks that we can efficiently certify, strong correction schemes are
more rare. In Section~\ref{sec:scorr}, we show that it is possible to obtain strong correction schemes for the sum function using only
$O(\frac 1 {\varepsilon^2})$ verifications of valid records. Since that many verifications are necessary to get a $1 \pm \varepsilon$ multiplicative
approximation for the sum, this implies a gap between the weak and strong correction models. The gap between them can be arbitrarily large though. As an example, the max-of-sums function we discussed earlier has certification and weak-correction complexity $O(\frac 1 {\varepsilon})$, although it is impossible to get a constant factor approximation in the strong correction model without verifying $\Omega(n)$ valid records.

  Despite the impossibility of obtaining strong correction schemes even for simple functions such as the max-of-sums, we can show that efficient
certification schemes exist for quite general optimization objectives. We prove (Theorem~\ref{thm:sCorrection}) a very interesting and tight connection
of strong correction schemes with \textit{sublinear algorithms that use conditional sampling}~\cite{GouleakisTZ2017}. We can exploit this connection to
directly obtain efficient strong correction schemes. This gives efficient schemes for general optimization tasks such as clustering, minimum spanning tree,
TSP and Steiner tree that capture settings where agent reports lie on some metric space.

\subsection{Related Work}
Our certification task resembles the task of property testing, as formalized in \cite{GGR96},
where one has to decide whether the data has a particular property versus being $\varepsilon$-far from it in some distance metric. In our case, the property we want to test is whether the evaluation of the function on all the collected data is equal to its evaluation on the subset of the valid data only.

  Our correction task is related to a large body of work in statistics on how to
deal with noisy or incomplete datasets. Several methods have been proposed for dealing with missing data. The popular method of imputation
\cite{Rubin:87,book:imputation,book:incomplete:data} corrects the dataset by filling in the missing values using the
maximum likelihood estimates.

In addition, the field of robust statistics \cite{hampel80, huber11} deals with the problem of designing estimators when the dataset contains random or adversarially corrupted datapoints. Several efficient algorithmic results have recently appeared in the context of robust parameter estimation and distribution learning \cite{DKK+16,DKK+17,DKK+18,DKS17,LRV16}. The goal of these works is to learn the parameters of a multidimensional distribution, that belongs to a known parametric family, while a constant fraction of the samples have been adversarially corrupted.
\cite{CharikarSG17} deal with parameter estimation in cases where more than half of the dataset is corrupted and identification is impossible by providing a list of candidate estimates. They show that the correct estimate can be chosen as long as a small ``verified'' set of data is provided. In contrast to \cite{CharikarSG17}, the verification oracle in our model allows us to verify any subset of datapoints but verification is costly.   
\cite{SVC16} consider similar verification access to the dataset in crowdsourced peer grading settings.

Selective verification of datapoints has also been explored in the context of mechanism design. \cite{FotakisTZ16} study mechanisms with verification and achieve truthfulness by solving a task similar to certification in social choice problems.

Finally, another related branch of literature considers the task of correcting datasets through local queries (\cite{JhaR11,BlumLR90,BhattacharyyaGJJRW12,SacksS10,AilonCCSL08,CanonneGR16}). For example, using local queries, \cite{AilonCCSL08} correct datasets to ensure monotonicity and other structural properties. \cite{CanonneGR16} solve similar local correction tasks for noisy probability distributions.

  \section{Model and Preliminaries}
\label{sec:prelim}

\paragraph{Notation} For $m \in \nats$ we denote the set $\{1,\cdots,m\}$ by $[m]$. 
Let $\Workers = [n]$ be the set of all records of the data set and $\Truth\subseteq \Workers$ be the subset of records
that are valid. The set $\Truth$ is unknown to the algorithm. Suppose we are given an input $\xw = (x_1, x_2 , \cdots , x_n)$ 
of length $n$, where every $x_i$ belongs to some set $\Domain$. Let $\vec{x}_{\cal T} = (x_j)_{j\in {\cal T}}$ be a vector 
consisting only of the coordinates of $\vec{x}$ that are in $\Truth$. Our general goal is to approximate the value of a 
symmetric function $f : \Domain^* \rightarrow \mathbb{R}_+$ on input $\vec x_\Truth \in \Domain^*$. Finally, Every input $x_j$
with $j \in \Truth$
is called valid and the rest $\Workers \setminus \Truth$ are called invalid. We consider two different tasks; \emph{certification}
and \emph{correction}.

\smallskip
\noindent In the {\bf certification task}, we count the total number of verifications of records needed to test between
the following two hypotheses:

\begin{enumerate}[label=(H\arabic*)]
    \item \[ f(\xw) \in \left[1 - \eps, \frac{1}{1 - \eps} \right] \cdot f(\xt) \] \label{eq:verH0}
    \item there exists a record $i$ such that $i \notin {\cal T}$. \label{eq:verH1}
\end{enumerate}

\noindent We allow a small probability $\delta$ that the algorithm fails to find a witness, i.e.
\[ \Prob\left( f(\xw) \not\in \left[1-\eps, \frac 1 {1-\eps} \right] \cdot f(\xt) \wedge \text{ no invalid record found} \right) \le \delta \]

\smallskip
\noindent In the {\bf correction task},  
the goal is to always compute an approximation to the correct answer, even when the certification task fails.  
We consider two models for correction the {\bf weak correction} and the {\bf strong correction}.

\smallskip
\noindent In the {\bf weak correction} model after catching an invalid record we are allowed to restart the task and
therefore we do not count the number of verifications that we already used before catching the invalid record. So if we have 
the guarantee that a weak correction scheme uses $v(n , \eps)$ verifications and during the execution of the scheme we find
$k$ invalid records, then the total number of verifications used is at most $(k + 1) \cdot v(n, \eps)$.

\smallskip
\noindent In the {\bf strong correction} model instead of restarting every time we find an invalid record, we just ignore 
the data of this record and we also don't count them in the number of verifications. So if we have the
guarantee that a strong correction scheme uses $v(n , \eps)$ verifications and during the execution of the scheme we find $k$
invalid records, then the total number of samples used is at most $k + v(n, \eps)$.
 
  \section{Certification Schemes for Linear Programs} \label{sec:certification}

  In this section, we present examples of certification schemes for frequently arising problems such as computing the sum 
of values or functions that can be expressed as a linear programs. In the next section we will see the more general statement
about certification schemes for functions that satisfy a general Lipschitz continuity condition.

\subsection{Computing the Sum of Records} \label{ssec:certificationSum}

  One of the most basic certification tasks is computing the sum of the values of the records. For this task, we are given $n$ 
positive real numbers $x_1, x_2, \dots, x_n$ each one comming from a record in our data set. Our goal is to certify whether the
sum of all the records is closed to the sum of the subset of records that are valid, i.e. belong to $\Truth$. 

  More formally, we want to check with probability of failure at most $\delta > 0$ whether 
$\sum_{i \in \Workers} x_i \in \left[1 - \eps, \frac{1}{1 - \eps} \right] \cdot \sum_{i \in \Truth} x_i$ or there is at least 
one record $i$ such that $i \notin {\cal T}$. We show that there exists an efficient certification scheme for this task:

\begin{figure}[!h] \label{fig:certification sum}
  \centering
  {\includegraphics[clip, trim=1cm 17cm 2cm 0.2cm,width=0.65\textwidth]{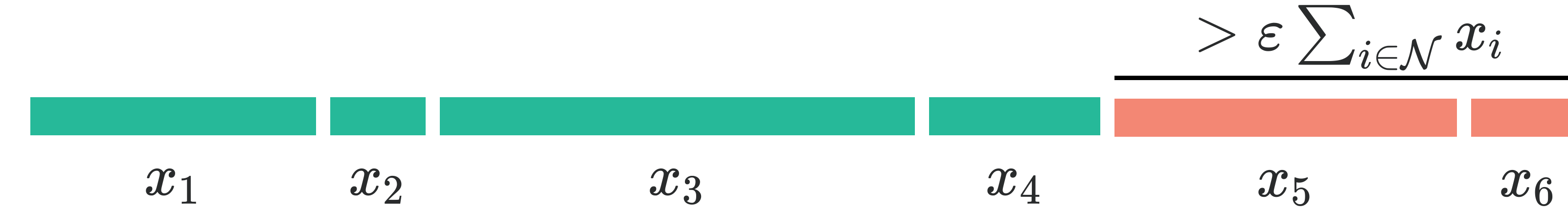}}
  \caption{If the invalid records make up more than $\eps$ fraction of the total sum, there is at least $\eps$ probability that an invalid record is found with a single verification.}
\end{figure}

\begin{lemma}\label{lem:sumV}
Let $x_1, x_2, \dots, x_n \ge 0$ be the values of the records in $\Workers$ and $f(\xw) = \sum_{i \in \Workers} x_i$. Consider 
the probability distribution $p_i = \frac {x_i} {\sum_j x_j}$ which assigns to each record a probability proportional to its 
value $x_i$. Verifying $k=\Theta(\frac{1}{\eps}\log (1/\delta))$ records sampled independently from $p$, guarantees that the 
certification task succeeds with probability at least $1 - \delta$.
\end{lemma}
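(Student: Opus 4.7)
The plan is to exploit the fact that the sampling distribution $p_i = x_i / f(\xw)$ is specifically weighted so that invalid records with large $x_i$ (the only ones that matter) have a correspondingly large chance of being sampled. The algorithm is implicit in the statement: draw $k$ indices i.i.d.\ from $p$, verify them, and declare hypothesis~\ref{eq:verH0} if and only if none of them is invalid.

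First I would observe that one side of \ref{eq:verH0} is trivially satisfied. Since every $x_i \geq 0$, we have $f(\xw) = f(\xt) + \sum_{i \in \Workers \setminus \Truth} x_i \geq f(\xt) \geq (1-\eps) f(\xt)$, so only the upper bound can fail. Rewriting, $f(\xw) > f(\xt)/(1-\eps)$ is equivalent to $\sum_{i \notin \Truth} x_i > \eps \cdot f(\xw)$. Dividing by $f(\xw)$, this is exactly the statement that the total $p$-mass on invalid records exceeds $\eps$:
\[
\Pr_{i \sim p}\bigl[i \notin \Truth\bigr] \;=\; \frac{\sum_{i \notin \Truth} x_i}{f(\xw)} \;>\; \eps.
\]

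Next I would bound the probability of failure. A failure means hypothesis~\ref{eq:verH0} is violated but every one of the $k$ verified samples was valid. Conditioning on the violation, each of the $k$ independent samples avoids invalid records with probability strictly less than $1 - \eps$, so the probability that all $k$ are valid is at most $(1-\eps)^k \le e^{-\eps k}$. Choosing $k = \lceil \tfrac{1}{\eps} \ln(1/\delta) \rceil = \Theta(\tfrac{1}{\eps} \log(1/\delta))$ makes this at most $\delta$, which matches the required bound on $\Pr\bigl(f(\xw) \notin [1-\eps, 1/(1-\eps)] f(\xt) \wedge \text{no invalid record found}\bigr)$.

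There is no real obstacle here; the only subtlety is recognizing that the multiplicative interval is one-sided for nonnegative sums and that sampling proportional to $x_i$ converts the ``$\eps$-fraction of the sum is invalid'' condition directly into a per-sample detection probability of at least $\eps$. A minor implementation point worth flagging is that the scheme needs $f(\xw) > 0$ to define $p$; if every $x_i = 0$, then $f(\xw) = f(\xt) = 0$ and hypothesis~\ref{eq:verH0} holds vacuously with zero verifications.
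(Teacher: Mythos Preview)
Your proposal is correct and follows essentially the same argument as the paper: observe that the lower side of the multiplicative interval is automatic for nonnegative sums, show that a violation of the upper side means a single sample from $p$ lands on an invalid record with probability greater than $\eps$, and conclude that $k = \Theta(\tfrac{1}{\eps}\log(1/\delta))$ independent samples miss all invalid records with probability at most $(1-\eps)^k \le \delta$. Your remark on the degenerate case $f(\xw)=0$ is a small addition not made explicit in the paper.
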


\begin{proof}
  Since $\Truth\subseteq \Workers$ and $x_i$'s are positive numbers, the inequality 
$\sum_{i \in \Workers} x_i \ge (1-\eps) \sum_{i \in \Truth} x_i$ holds trivially. If the inequality 
$\sum_{i \in \Workers} x_i \le \frac 1 {1-\eps} \sum_{i \in \Truth} x_i$ does not hold, we can bound the probability that all 
of the $k$ verifications fail to find an invalid record, as follows:

The probability that a single verification fails to find an invalid record is 
$\sum_{i \in \Truth} p_i = \frac {\sum_{i \in \Truth} x_i} {\sum_{i \in \Workers} x_i} < 1 - \eps.$

Therefore the probability that all $k$ verifications fail is at most $(1-\eps)^k$. Setting 
$k = \Theta(\frac{1}{\eps}\log (1/\delta))$, we guarantee that an invalid record is found with probability at least $1-\delta$.
\end{proof}

\subsection{Functions given by Linear Programs}
\label{ssubs:coverLP}

  We now extend the previous results for the sum function to more general objective functions that can be represented 
as linear programs. We first consider the special case of packing and covering LPs while later we present a result for
general linear programs.

$$
  \begin{array}{cc}
 	
\textbf{Packing LP} &
\textbf{Covering LP} \\
  
  \begin{array}{llll}
    \max_y  & \displaystyle\sum\limits_{i \in \Workers} c_{i} & y_{i} & \\
    \text{s.t.} & \displaystyle\sum\limits_{i \in \Workers} a_{ij} & y_{i} \le b_{j},  &j = 1, \dots, m \\
    & & y_i \ge 0, & i \in \Workers
  \end{array}
  
  &
  \begin{array}{llll}
      \min_x  & \displaystyle\sum\limits_{j = 1}^m b_{j} & x_{j} & \\
      \text{s.t.} & \displaystyle\sum\limits_{j = 1}^m a_{ij} & x_{j} \ge c_{i},  & i \in \Workers \\
      & & x_j \ge 0, & j = 1, \dots, m
    \end{array}
  \end{array}
$$

  Packing and Covering LP's are parameterized by the \emph{non-negative} parameters $a_{ij}, b_j, c_i$. We assume that
each record $i$ contains all parameters under his control, i.e. the value $c_i$ and $a_{ij}$ for all $j$, while the 
parameters $b_j$ are accurately known in advance.

  Packing LPs capture settings where several resources (each available in a quantity $b_j$) are to be divided among 
a set of agents in the system and agents report how much of each resource they need (given by $a_{ij}$) and how much
value they can generate if they are given the resources they ask for (given by $c_{ij}$). Our goal is to compute an
efficient allocation to agents that maximizes the total value generated. For the certification task, we want to 
certify that the total value generated by the true agents in an optimal allocation is close to the value computed 
under the possibly incorrect reports. We show that efficient certification schemes exist by extending the 
certification scheme presented for the sum function:

\begin{lemma} \label{lem:packingLP}
    Let $a_{ij}, c_i  \ge 0$ be values contained in the records $\Workers$ and $y^*$ be the optimal solution to the 
  packing LP. Consider the probability distribution $p_i = \frac {y^*_i c_i} {\sum_j y^*_j z_j}$ which assigns records 
  a probability proportional to their computed value $y^*_i c_i$. Verifying $k=\Theta(\frac{1}{\eps}\log (1/\delta))$ 
  records sampled independently from $p$, guarantees that the certification task for the packing LP succeeds with 
  probability at least $1 - \delta$.
\end{lemma}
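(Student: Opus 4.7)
The plan is to reduce this lemma to the same argument used for the sum function (Lemma~\ref{lem:sumV}) by showing that the optimal value $\sum_i y_i^* c_i$ plays the role of the sum $\sum_i x_i$. The key structural observation is that, because the packing LP has only non-negative coefficients $a_{ij}$ and the constraints are of the form $\le b_j$, deleting variables (records) never violates any constraint. In particular, if we take the worker-optimal solution $y^*$ and zero it out on all invalid coordinates, the resulting vector restricted to $\Truth$ is feasible for the truth LP. This will give the inequality
\[
    \mathrm{OPT}(\Truth) \;\ge\; \sum_{i \in \Truth} y_i^* c_i,
\]
which is exactly the analogue of the ``sum over valid records'' lower bound in Lemma~\ref{lem:sumV}.

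From here, I would mirror the proof of Lemma~\ref{lem:sumV}. First note that the direction $\mathrm{OPT}(\Workers) \ge (1-\eps) \mathrm{OPT}(\Truth)$ is automatic: the truth LP is a restriction of the worker LP (variables for $i \notin \Truth$ forced to $0$), so any feasible truth solution is feasible for the worker LP, giving $\mathrm{OPT}(\Workers) \ge \mathrm{OPT}(\Truth)$. For the other direction, suppose the certification fails, i.e. $\mathrm{OPT}(\Workers) > \frac{1}{1-\eps} \mathrm{OPT}(\Truth)$. Combining with the bound above,
\[
    \sum_{i \in \Workers} y_i^* c_i \;=\; \mathrm{OPT}(\Workers) \;>\; \frac{1}{1-\eps} \sum_{i \in \Truth} y_i^* c_i,
\]
so the total $p$-mass on valid records satisfies
\[
    \sum_{i \in \Truth} p_i \;=\; \frac{\sum_{i \in \Truth} y_i^* c_i}{\sum_{i \in \Workers} y_i^* c_i} \;<\; 1-\eps.
\]
Hence each independent sample from $p$ hits an invalid record with probability at least $\eps$, and the probability that all $k$ samples miss is at most $(1-\eps)^k \le \delta$ for $k = \Theta(\tfrac{1}{\eps}\log(1/\delta))$.

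The only non-routine step is the feasibility claim, namely that $\tilde y$ defined by $\tilde y_i = y_i^*$ for $i \in \Truth$ and $\tilde y_i = 0$ otherwise is feasible for the truth LP. This uses crucially that $a_{ij} \ge 0$: for every constraint $j$,
\[
    \sum_{i \in \Truth} a_{ij} \tilde y_i \;=\; \sum_{i \in \Truth} a_{ij} y_i^* \;\le\; \sum_{i \in \Workers} a_{ij} y_i^* \;\le\; b_j,
\]
and non-negativity $\tilde y_i \ge 0$ is preserved. Once this is in hand, the rest is the same geometric-tail calculation as for the sum. I expect the main subtlety, if any, to be bookkeeping: ensuring that the probability distribution $p$ is well defined (i.e.\ $\mathrm{OPT}(\Workers) > 0$; otherwise certification is trivial since the truth LP value is also $0$) and handling the degenerate case $y^* = 0$ separately.
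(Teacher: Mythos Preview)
Your proposal is correct and follows essentially the same approach as the paper: both use the key observation that zeroing out the coordinates of $y^*$ outside $\Truth$ yields a feasible solution for the truth LP (crucially using $a_{ij}\ge 0$), so $\mathrm{OPT}(\Truth)\ge\sum_{i\in\Truth}c_i y_i^*$, and then reduce to the sum-certification argument of Lemma~\ref{lem:sumV}. Your write-up is in fact slightly more detailed than the paper's sketch, spelling out the feasibility check and the degenerate $\mathrm{OPT}(\Workers)=0$ case.
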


  To see why this lemma holds, notice that the value $\sum_{i \in \Workers} c_i y^*_i$ computed using all the records 
$\Workers$ is higher than the value $\sum_{i \in \Truth} c_i \bar y_i$ computed using only the valid records $\Truth$. 
Moreover, if $\sum_{i \in \Truth} c_i y^*_i \ge (1-\eps) \sum_{i \in \Workers} c_i y^*_i$, then it must be that
$\sum_{i \in \Truth} c_i \bar y_i \ge (1-\eps) \sum_{i \in \Workers} c_i y^*_i$ as well, since setting $y_i = y^*_i$ 
for $i\in \Truth$ and $y_i=0$ otherwise is a feasible solution to the packing LP under the valid records. Finally, if  
$\sum_{i \in \Truth} c_i y^*_i < (1-\eps) \sum_{i \in \Workers} c_i y^*_i$, it means that invalid records contribute 
more than an $\eps$ fraction of the total value and thus an invalid record can be easily found as in the previous case
of the sum function.

  Covering LPs naturally capture various settings with public goods where the designer wants to introduce new goods to
satisfy all the demands coming from the records of our data set, but minimizing the total cost at the same time. In 
facility location problems, the designer wants to open facilities so that all a set of agents have access to at least
one facility and the agents report which locations are accessible to them.

  Certification schemes for covering LPs are less direct than previous examples, but can be easily obtained through LP 
duality. As the dual of a covering LP is a packing LP which has the \emph{exact same value}, we can use the certification 
scheme of Lemma~\ref{lem:packingLP} to certify that value. We directly get the following:

\begin{lemma} \label{lem:coveringLp}
    Let $a_{ij}, c_i \ge 0$ be values in the records $\Workers$. Verifying $k=\Theta(\frac{1}{\eps}\log (1/\delta))$ records 
  sampled independently according to a distribution $p$ given by the solution to the dual packing LP, guarantees that the 
  certification task for the covering LP succeeds with probability at least $1 - \delta$.
\end{lemma}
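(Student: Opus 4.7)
The plan is to reduce the statement directly to Lemma~\ref{lem:packingLP} via strong LP duality. Writing $f$ for the optimal value of the covering LP (so $f(\xw)$ uses all reported records and $f(\xt)$ uses only valid ones), the certification hypothesis is about the ratio $f(\xt)/f(\xw)$. The first step is to dualize: the dual of the covering LP on $\Workers$ is precisely the packing LP on $\Workers$ of Lemma~\ref{lem:packingLP}, with variables $y_i$ indexed by records and objective $\sum_{i\in\Workers} c_i y_i$. By strong duality its optimum equals $f(\xw)$. The analogous dualization of the covering LP on $\Truth$ (obtained by dropping the constraints indexed by invalid records) gives the packing LP on $\Truth$, whose optimum equals $f(\xt)$.

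Next I would observe the monotonicity $f(\xt) \le f(\xw)$: removing records enlarges the covering-LP feasible region, or equivalently shrinks the packing-LP one by forcing $y_i = 0$ for $i \notin \Truth$. Hence the lower side $f(\xw) \ge (1-\eps) f(\xt)$ of the certification hypothesis is automatic, and only the upper side $f(\xt) \ge (1-\eps) f(\xw)$ needs to be tested. The key structural fact is that if $y^*$ is an optimal packing solution on $\Workers$, then its restriction to coordinates in $\Truth$ is still packing-feasible on $\Truth$ (zeroing out entries only relaxes the constraints $\sum_i a_{ij} y_i \le b_j$). Consequently
\[
  f(\xt) \;\ge\; \sum_{i \in \Truth} c_i y^*_i \;=\; \sum_{i \in \Workers} c_i y^*_i \;-\; \sum_{i \notin \Truth} c_i y^*_i \;=\; f(\xw) - \sum_{i \notin \Truth} c_i y^*_i.
\]
Therefore, whenever the upper side fails, i.e.\ $f(\xt) < (1-\eps) f(\xw)$, the invalid records contribute strictly more than an $\eps$-fraction of the computable quantity $\sum_{i \in \Workers} c_i y^*_i$.

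The final step is to invoke the sampling argument from Lemma~\ref{lem:packingLP}. The distribution $p_i = c_i y^*_i / \sum_{j} c_j y^*_j$ is computable from the reported data alone (the algorithm solves the dual packing LP on $\Workers$ to get $y^*$). By the displayed inequality above, whenever the certification hypothesis fails, each independent draw from $p$ hits an invalid record with probability at least $\eps$; hence $k = \Theta(\tfrac{1}{\eps}\log(1/\delta))$ independent verifications miss every invalid record with probability at most $(1-\eps)^k \le \delta$, exactly as in the packing case. The main (mild) obstacle is bookkeeping rather than mathematics: one must check that the sampling distribution depends only on observable quantities, and that the restriction-of-$y^*$ trick really localizes the $\eps$-fraction of "missing mass" to records that the distribution $p$ preferentially samples.
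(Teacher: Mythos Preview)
Your proposal is correct and follows exactly the paper's approach: the paper simply observes that by strong LP duality the covering LP has the same value as its dual packing LP, and then applies Lemma~\ref{lem:packingLP} directly. You have spelled out the details (monotonicity, the restriction-of-$y^*$ feasibility, and the resulting $\eps$-mass inequality) that the paper leaves implicit in its one-line reduction, but the argument is the same.
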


  General LPs can be written in the form of a packing or a covering LP but have arbitrary (possibly negative) parameters 
$a_{ij}, b_j, c_i$. The value of such LPs is harder to certify in general as a lot more verifications than before might be 
needed. However, we can show that $m$ verifications suffice to certify their value exactly.

\begin{lemma} \label{lem:generalLp}
    Let $a_{ij}, c_i$ be (possibly negative) values contained in the records $\Workers$. The certification complexity for 
  general LPs (written in the form of packing or covering LPs above) is at most $m$.
\end{lemma}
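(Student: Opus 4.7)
The plan rests on the classical LP fact that a basic optimal solution of an LP with $m$ functional inequality constraints in $n$ non-negative variables has at most $m$ strictly positive coordinates: at a vertex, $n$ constraints must be tight, so at least $n-m$ of the non-negativity constraints $y_i\ge 0$ are tight. Since every feasible LP with a finite optimum admits a basic optimal solution, I can compute such a $y^*$ with $|\{i:y^*_i>0\}|\le m$ in polynomial time.

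For the packing form I would solve the LP on $\Workers$ to get a basic optimal $y^*$, form $S=\{i:y^*_i>0\}$, and verify each of the at most $m$ records in $S$. If some record in $S$ is invalid, we have produced a witness that $\Workers\neq\Truth$. Otherwise $S\subseteq\Truth$, and I would conclude $f(\xw)=f(\xt)$ in two steps. The direction $f(\xt)\le f(\xw)$ is immediate: any feasible $\bar y$ for the LP on $\Truth$ extends to a feasible solution on $\Workers$ of equal objective by zero-padding. For $f(\xt)\ge f(\xw)$, observe that $y^*$, viewed as indexed only by $\Truth$, is feasible for the LP on $\Truth$, since $\sum_{i\in\Truth}a_{ij}y^*_i=\sum_{i\in\Workers}a_{ij}y^*_i\le b_j$ (the dropped coordinates are zero), and its objective equals $\sum_{i\in\Truth}c_iy^*_i=f(\xw)$. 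Crucially, this step never uses non-negativity of $a_{ij}$ or $c_i$: the values of $y^*$ on $\Truth$ are not altered, so no constraint is perturbed by the signs of the coefficients being dropped.

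For the covering form the primal has $m$ variables and $n$ constraints (one per record), and its dual is a packing LP in $n$ non-negative variables $y_i$ with $m$ constraints whose optimum equals the covering optimum by strong duality. I would apply the packing scheme to this dual: compute a basic optimal dual $y^*$ and verify the at most $m$ records in its support. If no invalid record is found, $y^*|_\Truth$ is feasible for the dual of the covering LP on $\Truth$ (a packing LP with variables restricted to $\Truth$) with value $f(\xw)$, so that dual has value at least $f(\xw)$. By strong duality the covering LP on $\Truth$ also has value at least $f(\xw)$; combined with $f(\xt)\le f(\xw)$ (removing constraints enlarges the feasible region of a minimization LP), this gives $f(\xt)=f(\xw)$.

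There is no serious obstacle beyond identifying the right structural fact. The delicate point worth flagging is that, with possibly negative coefficients, the sampling argument of Lemma~\ref{lem:packingLP} breaks down: invalid records can push $f(\xw)$ in either direction, so the one-sided probability bound used there no longer applies. This is precisely why the deterministic, sparsity-based argument above is needed. As a bonus, the scheme certifies $f(\xw)=f(\xt)$ exactly rather than only up to a $(1\pm\eps)$ factor.
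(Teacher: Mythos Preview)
Your proof is correct and is essentially the paper's argument with the roles of packing and covering swapped: the paper argues directly on the covering form (at a basic optimum with $m$ variables, at most $m$ record-constraints are tight; verify those records) and obtains the packing case by duality, whereas you argue directly on the packing form (a basic optimal solution has at most $m$ positive coordinates; verify those records) and obtain the covering case by duality. The underlying structural fact---basic-solution sparsity, equivalently the bound on the number of tight constraints---is identical, and both routes yield exact certification with at most $m$ verifications.
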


  To see why this is true, notice that in the covering LP formulation, the optimal value is given by at most $m$ tight 
constraints as there are only $m$ variables. Verifying the $m$ records relevant to those constraints guarantees that the 
optimal value of the LP under only the value of the records is equal to the computed one. This is because only those $m$ 
constraints determine the optimal value and even if every other constraint $i$ was dropped (i.e. because $i \notin \Truth$)
the value would remain the same. The result also holds for general LPs under the packing LP formulation by LP duality.
  
  \section{Certification Schemes for $\vec{w}$-Lipschitz Functions and Applications} \label{sec:instOpt}

    In this section, we present a unified way of finding \textit{almost-optimal certification schemes}. For a given
a function $f$, a desired approximation parameter $\eps$ and an instance $\vec{x}_{\Workers}$, we want to compute the
``instance-optimal'' number of verifications in order to certify that
$f(\vec{x}_{\Workers}) \in \left[ 1 - \eps, \frac{1}{1 - \eps} \right] f(\xt)$ with probability of failure at most $1/3$.
The first result of this section is a structural result. We show that even though optimal schemes may be arbitrarily
complex, there are simpler schemes, that verify records independently, which are almost-optimal.
\smallskip

To show this we define for every set $S\subseteq \Workers$ the probability $p_S$ that the instance-optimal certification scheme
$\mathcal{C}^*$ verifies at least one record in $S$, i.e. $p_S = \Prob\left(\bigcup_{i \in S} \text{ \{$\mathcal{C}^*$ verifies
record $i$\}} \right)$. For such an event, we say that the \emph{certification scheme verifies $S$} and for simplicity we denote
$p_i$, the probability that $\mathcal{C}^*$ verifies record $i$, i.e. $p_i = p_{\{i\}}$.

    For the instance $\vec{x}_\Workers$, the set of invalid records could be any $S \subseteq \Workers$. For the certification
scheme to work with failure probability at most $2/3$, we must have that $p_{S} \ge 2/3$ for any subset $S \subseteq \Workers$
such that $f(\xw)/f(\vec{x}_{\Workers\setminus S}) \notin [1 - \eps, 1/(1 - \eps)]$. If this doesn't hold for some $S$, an
adversary could choose the set of invalid records to be $S$ and the certification scheme $\mathcal{C}^*$ would fail with
probability more than $1/3$. Moreover, even though the  optimal certification scheme $\mathcal{C}^*$ may verify records in a very
correlated way, we have that  $\sum_{i \in  S} p_i \ge p_{ S} \ge 2/3$ from a simple union bound. Therefore, the certification
scheme $\mathcal{C}^*$ must satisfy the following set of necessary conditions:
  \[ \sum_{i \in  S} p_i \ge 2/3 ~~~\forall S \subseteq \Workers \text{~ such that ~} \frac{f(\xw)}{f(\vec{x}_{\Workers\setminus S})} \notin \left[ 1 - \eps, \frac{1}{1 - \eps} \right] \]

  \noindent By linearity of expectation, the expected total number of verifications that $\mathcal{C}^*$ performs is,
  \[ \Exp[ \text{total number of verifications} ] = \Exp \left[ \sum_{i \in \Workers} \vec{1}\text{\{$\mathcal{C}^*$ verifies record $i$\}} \right] = \sum_{i \in \Workers} p_i \] 

\noindent The above imply that the value of the following linear program is a lower bound on the total number of verifications
needed by the optimal scheme $\mathcal{C}^*$ for this specific instance $\vec{x}_\Workers$.
  \begin{equation}
    \label{eq:neccLP1}
    \begin{array}{ll@{}ll}
      \min  & \displaystyle\sum\limits_{i \in \Workers} & ~p_i & \\
      \text{s.t.} & \displaystyle\sum\limits_{i \in S} & ~p_i \ge 2/3, & \forall S \subseteq \Workers, \frac{f(\xw)}{f(\vec{x}_{\Workers\setminus S})} \notin \left[ 1 - \eps, \frac{1}{1 - \eps} \right] \\
                  & 0 \le & ~p_i \le 1, & \forall i \in \Workers
    \end{array}
  \end{equation}

  \noindent Notice that the solutions to LP~\eqref{eq:neccLP1}, do not directly correspond to certification schemes with success
probability $2/3$. However, as we show, any solution to LP \eqref{eq:neccLP1} can be converted to a certification scheme with
number of verifications at most twice as many as the optimal value of LP \eqref{eq:neccLP1} and success probability $2/3$. Since
the optimal value of LP \eqref{eq:neccLP1} lower bounds the instance optimal number of verifications, our derived certification
scheme will be a 2-approximation to the instance optimal scheme.

\begin{definition}
    For a solution $\bar{p}$ of LP~\eqref{eq:neccLP1}, we define the certification scheme $\mathcal{C}_{\bar{p}}$ that verifies
  each record $i$ independently with probability $q_i=\min\{2\bar{p}_i,1\}$.
\end{definition}

  It is clear that the certification scheme $\mathcal{C}_{\bar{p}}$ uses in expectation at most twice as many verifications as
the optimal value of LP~\eqref{eq:neccLP1} and the instance optimal scheme. We now show that it also achieves, success probability
of $2/3$ as required.
\medskip

\noindent Assume that the subset of valid records is $\Truth = \Workers \setminus S$. The probability that the scheme
$\mathcal{C}_{\bar{p}}$ does not verify anyone in the set $S = \{s_1, \dots, s_m\}$ is
\[ \Prob(\text{$\mathcal{C}_{\bar{p}}$ doesn't verify $\Truth$}) = \Prob( (\text{$\mathcal{C}_{\bar{p}}$ doesn't verify $s_1$}) \wedge \dots \wedge (\text{$\mathcal{C}_{\bar{p}}$ doesn't verify $s_m$})) =\prod_{s \in S} (1 - q_s) \]

\noindent Since $\bar{p}$ is a feasible solution to LP~\eqref{eq:neccLP1}, the probability that some record from $S$ is verified is

\[ \Prob(\text{$\mathcal{C}_{\bar{p}}$ verifies $S$}) = 1 - \prod_{s \in S} (1 - q_s) \ge 1 - \exp\left( - 2\sum_{i \in S} \bar{p}_s \right) \ge 1 - \exp\left( - 4/3 \right)\ge 2/3. \]
\noindent This means that our certification scheme succeeds with probability $2/3$ using at most twice the optimal number of
verifications in expectation. We can amplify the probability of $2/3$, making it arbitrarily close to one by repeating the
certification scheme. Since the repetitions are independent and each of them fails with probability at most $1/3$, after $r$
repetitions the total probability of failure is $3^{-r}$. Repeating $r = \log(1 / \delta)$ times, guarantees that for any
subset $S$, the probability that it will be verified is at least $1-\delta$. This result is summarized in the following theorem.

\begin{theorem} \label{thm:optInstV}
    For any given function $f : \Domain^* \to \reals$ and any set of valus if records $\xw$, a solution to LP~\eqref{eq:neccLP1} corresponds to a
  certification scheme that verifies records of the data set independently using at most twice as many verifications as the optimal scheme for
  this instance and succeeds with probability $2/3$. Repeating the scheme $\log(1 / \delta)$ times increases the success probability to
  $1-\delta$.
\end{theorem}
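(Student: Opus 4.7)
The plan is to formalize the three-step argument already sketched in the preceding discussion: establish LP~\eqref{eq:neccLP1} as a lower bound on the instance-optimal complexity, verify that $\mathcal{C}_{\bar{p}}$ uses at most twice the LP value in expectation, and show that it succeeds with probability at least $2/3$, from which amplification via independent repetition is immediate.

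First, I would argue that the optimal instance scheme $\mathcal{C}^*$ induces a feasible solution to LP~\eqref{eq:neccLP1}. Let $p_i$ denote the marginal probability that $\mathcal{C}^*$ verifies record $i$. For any $S \subseteq \Workers$ whose ratio $f(\xw)/f(\vec{x}_{\Workers \setminus S})$ falls outside $[1-\eps, 1/(1-\eps)]$, the union bound gives $\sum_{i \in S} p_i \geq p_S \geq 2/3$, since otherwise an adversary placing the invalid records exactly at $S$ would force $\mathcal{C}^*$ to fail with probability greater than $1/3$. Linearity of expectation identifies the LP objective $\sum_i p_i$ with the expected number of verifications of $\mathcal{C}^*$, so the LP optimum lower-bounds the instance-optimal complexity.

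Second, for any feasible $\bar{p}$, the scheme $\mathcal{C}_{\bar{p}}$ uses $\sum_i q_i \leq 2 \sum_i \bar{p}_i$ verifications in expectation by linearity, which is at most twice the LP optimum. For the success analysis, fix the adversary's choice of an invalid set $S$ in the bad regime. If $q_s = 1$ for some $s \in S$, then $\mathcal{C}_{\bar{p}}$ verifies $s$ deterministically. Otherwise every $s \in S$ satisfies $q_s = 2\bar{p}_s < 1$, and independence combined with $1 - x \leq e^{-x}$ gives
\[ \Prob(\text{no record of $S$ is verified}) = \prod_{s \in S}(1 - q_s) \leq \exp\!\Bigl(-\sum_{s \in S} q_s\Bigr) = \exp\!\Bigl(-2 \sum_{s \in S} \bar{p}_s\Bigr) \leq e^{-4/3} < 1/3, \]
where the last step uses LP feasibility $\sum_{s \in S} \bar{p}_s \geq 2/3$. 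For any $S$ outside the bad regime the approximation inequality holds automatically, so the scheme succeeds with probability at least $2/3$ unconditionally.

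Finally, amplification is straightforward: independently running $\mathcal{C}_{\bar{p}}$ for $r = O(\log(1/\delta))$ rounds and declaring certification success only if no run finds an invalid record reduces the failure probability to at most $3^{-r} \leq \delta$. The only real subtlety is the truncation $q_i = \min\{2\bar{p}_i, 1\}$, which must be handled to keep sampling probabilities in $[0,1]$ without breaking the exponential bound; the case split above dispatches it cleanly, and the remaining calculations are routine.
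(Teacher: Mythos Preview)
Your proposal is correct and follows essentially the same argument as the paper: the LP lower bound via union bound and linearity of expectation, the doubling to get $q_i$, the exponential bound $\prod(1-q_s)\le e^{-4/3}<1/3$, and amplification by $\log(1/\delta)$ independent repetitions. Your explicit case split on whether some $q_s=1$ is a small refinement over the paper, which silently relies on $1-q_s\le e^{-2\bar p_s}$ holding in both cases; otherwise the proofs are the same.
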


\paragraph{Remark} We note that the LP~\eqref{eq:neccLP1} has exponentially many constraints and it may be computationally
intractable to solve depending on the function. It is very useful though as a tool to uncover the structure of approximately
optimal certification schemes. For example, Theorem \ref{thm:optInstV} implies that even though optimal schemes may be arbitrarily
complex, there are simpler schemes, that verify records independently, which are almost-optimal.
\medskip

In the following section, we derive a general methodology to obtain solutions to LP~\eqref{eq:neccLP1} for the very general class of
$\vec{w}$-Lipschitz functions.

\subsection{Certification Schemes for $\mathbf{w}$-Lipschitz Functions}

  In this section we show how we can use Theorem \ref{thm:optInstV} to get sufficient smoothness conditions on the function $f$
that can be used to provide certification schemes with small number of verifications.

  For any record $i \in \Workers$ we define $w_i$ to be the \textit{weight} of the record $i$. The weight of record $i$ will be
the quantity that will determine the probability that we will verify record $i$ according to the verification scheme that we want
to define. We state now the property that we want $f$ to satisfy in order to find a good verification scheme.

\begin{definition} \label{def:wcont}
  We say that a function $f : \Domain^* \to \reals$ is $\vec{w}$-\emph{Lipschitz}, with $\vec{w} \in \reals_+^n$, if for any
  $S \subseteq \Workers$
  \[ |f(\xw) - f(\vec{x}_{\Workers\setminus S})| \le \sum_{i \in S} w_i \]
\end{definition}

\noindent For any function that satisfies this Lipschitz property we can get a good verification scheme that depends on the weight
vector $\vec{w}$.

\begin{theorem} \label{thm:wcont}
    For any non-negative $\vec{w}$-Lipschitz function $f : \Domain^* \to \reals_+$, set of records $\Workers$ with value $\xw$, and
  real numbers $\eps, \delta > 0$,
  \noindent there exists a certification scheme that uses at most
  $\frac{4\sum_{i \in \Workers} w_i}{3 f(\xw) \eps} \log(1 / \delta) $
  verifications and has probability of success at least $1 - \delta$.
\end{theorem}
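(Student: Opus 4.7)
The plan is to reduce the task to constructing an explicit feasible solution to LP~\eqref{eq:neccLP1} and then invoke Theorem~\ref{thm:optInstV} to turn it into a certification scheme. Concretely, I would define
\[ \bar{p}_i = \min\!\left\{\frac{2\, w_i}{3\, \eps\, f(\xw)},\; 1\right\} \qquad \text{for every } i \in \Workers, \]
and argue that $\bar p$ is feasible for LP~\eqref{eq:neccLP1}. The objective value $\sum_{i \in \Workers} \bar p_i$ is then at most $\frac{2 \sum_{i \in \Workers} w_i}{3\, \eps\, f(\xw)}$, and by Theorem~\ref{thm:optInstV} the induced scheme $\mathcal{C}_{\bar p}$ uses at most twice this in expectation while succeeding with probability $2/3$. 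Boosting via $\log(1/\delta)$ independent repetitions (as in the theorem) then yields the claimed bound $\frac{4\sum_{i \in \Workers} w_i}{3\, f(\xw)\, \eps} \log(1/\delta)$.

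The only nontrivial step is verifying that $\bar p$ satisfies the covering constraint of LP~\eqref{eq:neccLP1}. Fix any $S \subseteq \Workers$ with $\frac{f(\xw)}{f(\vec x_{\Workers \setminus S})} \notin [1-\eps, \tfrac{1}{1-\eps}]$. I would translate this multiplicative deviation into an additive lower bound: if $f(\xw) > \frac{1}{1-\eps} f(\vec x_{\Workers \setminus S})$ then $f(\xw) - f(\vec x_{\Workers \setminus S}) \ge \eps\, f(\xw)$, and if $f(\xw) < (1-\eps) f(\vec x_{\Workers \setminus S})$ then $f(\vec x_{\Workers \setminus S}) - f(\xw) \ge \frac{\eps}{1-\eps} f(\xw) \ge \eps\, f(\xw)$. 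In either case $|f(\xw) - f(\vec x_{\Workers \setminus S})| \ge \eps\, f(\xw)$. The $\vec{w}$-Lipschitz property from Definition~\ref{def:wcont} then gives
\[ \sum_{i \in S} w_i \;\ge\; |f(\xw) - f(\vec x_{\Workers \setminus S})| \;\ge\; \eps\, f(\xw). \]

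With this key inequality in hand, feasibility follows directly: either some $i \in S$ has $\bar p_i = 1$, in which case $\sum_{i \in S} \bar p_i \ge 1 \ge 2/3$; or else $\bar p_i = \frac{2 w_i}{3\,\eps\, f(\xw)}$ for every $i \in S$, and
\[ \sum_{i \in S} \bar p_i \;=\; \frac{2}{3\,\eps\, f(\xw)} \sum_{i \in S} w_i \;\ge\; \frac{2}{3\,\eps\, f(\xw)} \cdot \eps\, f(\xw) \;=\; \frac{2}{3}. \]
I anticipate that the main subtlety is exactly the multiplicative-to-additive translation above; everything else is routine substitution into Theorem~\ref{thm:optInstV}. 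Note also that the assumption $f \ge 0$ is used implicitly to ensure $\bar p_i \ge 0$, and the result is vacuous when $f(\xw) = 0$ since then $\vec w = \vec 0$ is allowed only if $f$ is constant, in which case zero verifications suffice.
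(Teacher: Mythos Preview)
Your proposal is correct and follows essentially the same route as the paper: define $p_i$ proportional to $w_i/(\eps f(\xw))$, use the $\vec{w}$-Lipschitz inequality together with the multiplicative-to-additive translation to verify feasibility of LP~\eqref{eq:neccLP1}, and then invoke Theorem~\ref{thm:optInstV}. If anything, your version is slightly more careful, since you explicitly clip at~$1$ to respect the box constraint $p_i \le 1$ of the LP, which the paper leaves implicit.
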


  In Appendix \ref{sec:app:proofOfLipschitz} we present the proof of Theorem \ref{thm:wcont}. Also, in Appendix \ref{ssec:tspIOpt} and \ref{ssec:steinerIOpt}, we present two applications of Theorem \ref{thm:wcont} to get
certification schemes for the \emph{Traveling Salesman (TSP)} and the \emph{Steiner Tree} problems. In both applications, we show that
the optimal solution is $\vec{w}$-Lipschitz with $\frac{\sum_{i \in \Workers} w_i}{f(\xw)} \le 2$. Hence, the total number of
verifications by Theorem \ref{thm:wcont} is $O((1/\eps) \log(1/\delta))$.
 
  \clearpage
  \section{Weak Correction Model}
\label{sec:wcorr}

  We show how starting from a certification scheme, we can obtain a weak-correction scheme with the same verification complexity
(up to constants).

\begin{theorem}\label{thm:reduction}
    Suppose that there exists a certification scheme for a function $f$ that uses $q(n,\eps)$ verifications and fails with probability $1/3$. Then,
  there exists a weak-correction scheme with verification complexity $O(q(n,\eps) \log(1/\delta))$ that outputs an accurate estimate of the function
  $f$ and fails with  probability $\delta$.
\end{theorem}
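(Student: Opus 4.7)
My plan is to amplify the given certification scheme via a biased random-walk stopping rule rather than by plain repetition, so that the amplification threshold is $T=\Theta(\log(1/\delta))$ instead of the $\Theta(\log(n/\delta))$ forced by a naive union bound over invalid removals. Concretely, I maintain a counter $c$ initialized to $0$ and iterate: run the certification scheme on the current $\Workers$; if it certifies, set $c\gets c+1$; if it returns an invalid record $i$, remove $i$ from $\Workers$ and set $c\gets c-1$; terminate and output the current $f(\xw)$ the first time $c$ reaches $T=\lceil\log_2(1/\delta)\rceil+O(1)$.

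For correctness, the only failure event is stopping while the current $\Workers$ is in hypothesis H2, i.e. $f(\xw)\notin[1-\eps,\,1/(1-\eps)]\cdot f(\xt)$. Whenever H2 holds, the given certification scheme certifies with probability $\le 1/3$ and finds an invalid record with probability $\ge 2/3$, so the counter restricted to H2-steps is a biased random walk with $\Prob(+1)\le 1/3$ and $\Prob(-1)\ge 2/3$. By gambler's ruin with $p/q\le 1/2$, such a walk started at $0$ ever reaches $+T$ with probability $\le (1/2)^T\le\delta$. Any step spent in an H1 state contributes nothing to the failure probability, because the current output $f(\xw)$ is already an accurate estimate of $f(\xt)$ by the definition of H1.

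For the verification complexity under the weak-correction budget, I observe that between two successive invalid removals the number of certification invocations is dominated by a geometric random variable with constant mean when H2 currently holds (each invocation is invalid with probability $\ge 2/3$), and is at most $T$ in the final stretch where the counter must climb from some $c_0\le T-1$ up to $T$. Summing over the $k$ invalid removals gives a total of $O\bigl((k+1)\cdot q(n,\eps)\log(1/\delta)\bigr)$ verifications, matching the per-phase budget $v(n,\eps)=O(q(n,\eps)\log(1/\delta))$ claimed in the theorem.

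The main obstacle, as flagged in the introduction, is the non-monotone case: when $f$ is not monotone, removing an invalid record can push $\Workers$ from H1 back into H2, producing several H1--H2 alternations rather than a single monotone transition and breaking the clean application of gambler's ruin to a single H2 interval. My plan to handle this is to exploit the fact that every H1-to-H2 transition occurs on a $-1$ step, namely an invalid removal, so the counter strictly decreases across every such transition; together with a potential-function / coupling argument that absorbs the upward contributions from H1-intervals into the amplification threshold, this preserves the $(1/2)^T$ bound up to constant factors and therefore keeps the verification complexity at $O(q(n,\eps)\log(1/\delta))$.
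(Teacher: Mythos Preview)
Your random-walk scheme and its analysis for the monotone case are essentially the paper's argument with the walk reversed (you climb from $0$ to $T$; the paper descends from $C$ to $0$), and the deterministic count ``total invocations $=T+2k$'' indeed gives the claimed budget.

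The genuine gap is the non-monotone case. You output the \emph{current} $f(\xw)$ when the counter first hits $T$ and assert that a ``potential-function / coupling argument'' absorbs the upward drift from H1-intervals, but you never supply that argument, and it is far from clear one exists. Your observation that every H1$\to$H2 transition is a $-1$ step is correct but does not control what happens \emph{inside} an H1 interval: the certification scheme carries no bias guarantee there, so the counter may climb all the way to $T-1$ before the single $-1$ drops you into H2 at $T-2$, from which the $\tfrac{1}{3}$-biased walk still reaches $T$ with constant probability. With many H1/H2 alternations nothing you have written prevents these constant-probability events from accumulating. The natural candidate potential $2^{c_t}$ fails exactly here: it is a supermartingale during H2 steps but \emph{not} during H1 steps whose certify-probability exceeds $1/2$, so optional stopping does not go through.

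The paper handles non-monotone $f$ by a genuinely different, two-phase mechanism. It runs the same random walk to termination but does \emph{not} output $f$ on the final subset. Instead it records every intermediate subset $S_i$ visited along the walk and argues that, conditioned on the walk terminating, with probability $1-\delta/2$ at least one $S_i$ must have certify-probability above $2/5$ (since a walk that is $3/5$-biased toward ``invalid found'' at every step would not have terminated); any such $S_i$ is necessarily good. A separate multi-round testing procedure then repeatedly re-runs the certification scheme on the surviving candidates with geometrically shrinking error parameters, halving the pool each round, until a good $S_i$ is isolated; the output is $f(\vec{x}_{S_i})$ for that subset (with a median over $O(\log(1/\delta))$ independent repeats). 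This second phase is where the non-monotone difficulty is actually discharged, and your proposal has no counterpart to it.
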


  Theorem \ref{thm:reduction} shows that the certification task we defined in section \ref{sec:certification} is already strong enough to perform this
seemingly more challenging task. Intuitively, this is because we can run many rounds of certification until we have enough confidence that we have an
accurate result while we remove from the dataset any invalid record we might find during these rounds. Indeed, a simple way to make the conversion
is to start from a certification scheme with error probability 1/3 and reduce its probability of error to $\delta/n$, by repeating it $\log(n/\delta)$
times. Then use this scheme repeatedly until no more invalid records are detected. By a union bound, the probability of error is at most $\delta$ since
the process takes at most $n$ steps.
  Theorem \ref{thm:reduction} shows a stronger result than the above result showing that the logarithmic dependence on the number of records can be avoided
if the stopping time is more carefully chosen.

We provide here a simple analysis when the function $f$ is \emph{increasing} with record values and defer the
full proof of Theorem~\ref{thm:reduction} to Appendix~\ref{app:weak}.

\begin{proof}[Proof of Theorem \ref{thm:reduction} for increasing functions]
  Our weak correction scheme works by repeating the certification process enough times so that the number of times it failed is less than the number of
times it succeeded. In particular, we model this procedure as a random walk on the integers starting from point C and ending once it reaches 0. We move
to the right whenever the round of verifications (i.e an execution of the certification scheme) reveals some invalid record, and we move to the left
otherwise.

  The random walk is guaranteed to return to the origin eventually since if all invalid records are removed the certification scheme will not be able to
find any additional invalid record. The only case that the weak correction scheme fails is if it returns early without removing enough invalid records
having a value larger than $f(\xt)/(1-\eps)$. In such a case, at all points of the random walk the estimate was always larger than $f(\xt)/(1-\eps)$ which
means that the random walk was biased with probability at least $2/3$ to the right. The probability that such a biased random walk reaches the origin is
at most $\left( \frac {1/3} {2/3} \right)^C = 2^{-C}$. Setting $C = \log(1/\delta)$ times guarantees a probability of error $\delta$.
The number of verifications performed if $k$ invalid records are found is $(C + 2 k) q(n,\eps)$, thus the total verification complexity is
$O(q(n,\eps) \log(1/\delta))$.
\end{proof}

  \section{Strong Correction Model} \label{sec:scorr}

  In section we present cases where it is possible to obtain much more efficient correction schemes that do not involve
repetition of the certification scheme as in the case of weak correction. We first obtain a strong correction scheme for the 
sum function which also implies strong correction schemes for other functions such as average. However as we show there are 
simple functions, e.g. the composition of the max and the sum function, which do not admit strong correction schemes but for 
which good weak correction schemes exist. Finally we show that despite the previous lower bound there is a rich class of
problems for which a strong correction scheme exist. We do so by proving a connection of strong correction schemes with
algorithms that use conditional sampling.

\subsection{Computing the Sum of Values of Records}

  We use the same formulation as in Section \ref{ssec:certificationSum} and we get the following result.
  
\begin{lemma} \label{lem:sumC}
    Let $x_1, x_2, \dots, x_n \ge 0$ be the values of the records $\Workers$ and $f(\xw) = \sum_{i \in \Workers} x_i$.
  Consider the probability distribution $p_i = \frac {x_i} {\sum_j x_j}$ which assigns each record $i$ a probability 
  proportional to their value $x_i$. If we sample $M$ times independently from $p$ and verifying records until 
  $k = \Theta\left(\frac{1}{\eps^2}\log (1/\delta)\right)$ valid records found, then the estimator 
  $\hat{s} = \frac{k}{M} \sum_{i \in \Workers} x_i$ is in the range 
  $\left[ 1 - \eps, \frac{1}{1 - \eps} \right] \cdot \sum_{i \in \Truth} x_i$ with probability at least $1 - \delta$. 
  This gives a strong correction scheme with probability of success at least $1 - \delta$.
\end{lemma}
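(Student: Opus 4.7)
The plan is to view the process as a negative binomial experiment and to reduce the whole analysis to two standard multiplicative Chernoff bounds on partial sums of i.i.d.\ Bernoulli variables.

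First I would set $q = \sum_{i \in \Truth} x_i / \sum_{i \in \Workers} x_i$, which by the choice of the sampling distribution $p$ is precisely the probability that a single draw from $p$ lands in $\Truth$. The goal of the estimator $\hat s = (k/M)\, f(\xw)$ is to approximate $\sum_{i \in \Truth} x_i = q \cdot f(\xw)$, so proving the statement is equivalent to showing that with probability at least $1-\delta$ the ratio $k/M$ lies within a factor $(1 \pm \eps)$ of $q$, i.e.\ $M \in \bigl[(1-\eps)\,k/q,\; k/((1-\eps)q)\bigr]$. I would also note up front that this yields a \emph{strong} correction scheme, since by construction exactly $k$ verifications return valid records and by definition of the strong correction model the remaining verifications of invalid records encountered along the way do not count against the budget.

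Next I would convert the two tail events on the stopping time $M$ into tail events on partial sums. Let $Y_1, Y_2, \ldots$ be i.i.d.\ Bernoulli$(q)$ indicators of whether the $j$-th sample is valid, and set $S_t = \sum_{j \le t} Y_j$; then $\{M > t\} = \{S_t < k\}$ and $\{M \le t\} = \{S_t \ge k\}$. For the upper tail, I would take $t = \lceil k/((1-\eps)q)\rceil$, so that $\Exp[S_t] \ge k/(1-\eps)$ and the event $S_t < k$ is a multiplicative deviation of factor $(1-\eps)$ below the mean; the standard multiplicative Chernoff bound then gives $\Prob(M > t) \le \exp(-\Omega(\eps^2 k))$. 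For the lower tail, I would symmetrically take $t = \lfloor (1-\eps)k/q\rfloor$ so that $\Exp[S_t] \le (1-\eps)k$, and bound $\Prob(M < t) = \Prob(S_{t-1} \ge k)$ by the upper multiplicative Chernoff bound, again of order $\exp(-\Omega(\eps^2 k))$.

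Finally, choosing $k = \Theta\bigl((1/\eps^2)\log(1/\delta)\bigr)$ with a large enough constant makes each of the two Chernoff tails at most $\delta/2$, so a union bound gives $M \in [(1-\eps) k/q,\; k/((1-\eps)q)]$ with probability $\ge 1-\delta$. Substituting back yields $k/M \in [(1-\eps)q,\; q/(1-\eps)]$, and therefore $\hat s = (k/M)\,f(\xw) \in [1-\eps, 1/(1-\eps)] \cdot \sum_{i \in \Truth} x_i$, as required. I do not expect a genuine obstacle here: the reduction to a fixed-horizon Bernoulli process cleanly eliminates the negative-binomial stopping issue, and the only care needed is in floor/ceiling bookkeeping at the boundary and in verifying that the constant absorbed in $\Theta(\cdot)$ is large enough to dominate the factors of $1/(1-\eps)$ coming out of both Chernoff tails (which is harmless assuming, as is standard, $\eps$ bounded below $1/2$).
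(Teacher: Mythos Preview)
Your proposal is correct and follows essentially the same route as the paper: both define $q=\sum_{i\in\Truth}x_i/\sum_{i\in\Workers}x_i$ as the success probability of a single draw, translate the desired guarantee on $\hat s$ into the statement that $k/M$ lies within a $(1\pm\eps)$ factor of $q$, and then control the two tails of $M$ via multiplicative Chernoff bounds on the number of valid records among a fixed number of draws. Your explicit use of the negative-binomial/binomial duality $\{M>t\}=\{S_t<k\}$ is a cleaner packaging than the paper's conditional-probability manipulation, but the underlying argument is the same.
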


  A detailed proof of Lemma \ref{lem:sumC} can be found in Appendix \ref{sec:app:scorr} were we prove also that $\Omega(1/\eps^2)$ verifications 
are nececssary.

\subsection{Lower Bound for the Maximum of Sums Function}

  In this section we show that no efficient strong correction scheme exists for the composition of the max and the sum function. 
More precisely we assume we have a partition $\mathcal{J} = \{\Workers_1, \dots, \Workers_{\ell}\}$ of the set $\Workers$ and we 
want a strong correction scheme for the function $f(\xw) = \max_{A \in \mathcal{J}} \sum_{i \in A} x_i$. We show that any strong 
correction scheme for $f$ that achieves constant approximation has to verify at least a constant fraction of the records.

\begin{lemma} \label{lem:maxOfSumC}
    Let $c \in \reals_+$ then there exists a partition $\mathcal{J} = \{\Workers_1, \dots, \Workers_{\ell}\}$ of the set of records 
  $\Workers$ and a vector $\xw \in \reals^n$ such that any strong correction scheme for the function 
  $f(\xw) = \max_{A \in \mathcal{J}} \sum_{i \in A} x_i$, that returns an estimator $\hat{s}$ such that 
  $\hat{s} \in \left[\frac{1}{c}, c\right] \cdot f(\xt)$ with probability at least $3/4$, has to verify at least 
  $\abs{\Workers} / 4 c^2$ records.
\end{lemma}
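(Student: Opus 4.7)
The plan is to exhibit an explicit hard instance with only a handful of groups and to use a coupling/indistinguishability argument between two candidate valid subsets that produce identical observations to the scheme. I take the partition $\mathcal{J}=\{\Workers_1,\ldots,\Workers_4\}$ of $\Workers$ into $\ell=4$ equal-sized groups, each of size $m=n/4$, and set $x_i=1$ for every record, so that $f(\xw)=m$. The adversarial freedom is then entirely in the choice of $\Truth$.

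Fix any (possibly randomized) strong correction scheme $\mathcal{A}$. I will compare two candidate truths. The first is $\Truth^{(a)}=\Workers_1$, with $f(\Truth^{(a)})=m$. Let $Q_1\subseteq \Workers_1$ denote the set of records that $\mathcal{A}$ queries inside $\Workers_1$ when run against $\Truth^{(a)}$ (with a fixed random tape) and set $q_1=|Q_1|$; note that $\mathcal{A}$ performs exactly $q_1$ valid verifications on this instance because every record in $\Workers_1$ is valid. The alternative truth is $\Truth^{(b)}=Q_1\subseteq\Workers_1$, for which $f(\Truth^{(b)})=q_1$. I claim that, with the same random tape, $\mathcal{A}$ produces identical transcripts in both worlds, which I argue by induction on the query step: if the transcripts agree through step $t$, then the next query $r_{t+1}$ is the same in both; if $r_{t+1}\notin\Workers_1$ the answer is ``invalid'' in both since $\Truth^{(b)}\subseteq\Workers_1$, and if $r_{t+1}\in\Workers_1$ then by the inductive hypothesis $r_{t+1}\in Q_1=\Truth^{(b)}$, so the answer is ``valid'' in both. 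Hence $\mathcal{A}$'s output $\hat s$ is identical in the two worlds.

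A single $\hat s$ must lie in $[m/c,cm]$ (to $c$-approximate $f(\Truth^{(a)})=m$) and also in $[q_1/c,cq_1]$ (to $c$-approximate $f(\Truth^{(b)})=q_1$). Since $q_1\le m$, the intersection is nonempty iff $cq_1\ge m/c$, i.e.\ $q_1\ge m/c^2 = n/(4c^2)$. Whenever $q_1< n/(4c^2)$ the intervals are disjoint, so $\hat s$ fails the approximation on at least one of the two truths. To conclude for a randomized $\mathcal{A}$ succeeding with probability at least $3/4$ on each instance, let $E=\{q_1<n/(4c^2)\}$; conditional on $E$ the two success events are disjoint, hence $\Pr[\mathrm{succ}(a)]+\Pr[\mathrm{succ}(b)]\le 2-\Pr[E]$, and combined with both success probabilities being at least $3/4$ this forces $\Pr[E]\le 1/2$. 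Therefore with probability at least $1/2$ the scheme performs $q_1\ge n/(4c^2)$ valid verifications on the instance $\Truth^{(a)}=\Workers_1$, establishing the claimed lower bound.

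The main obstacle is the coupling step itself: because $\Truth^{(b)}$ is defined in terms of $\mathcal{A}$'s own $(a)$-run (and thus depends on its random tape), one has to condition on the random tape before asserting that the two transcripts literally coincide, and then undo that conditioning before applying the disjoint-events inequality for randomized success. Once the inductive argument is done cleanly under this conditioning, the remainder of the proof is mechanical.
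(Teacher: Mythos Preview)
Your argument has a genuine gap, and the lower bound is in fact false for the $4$-group instance you chose.

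The broken step is the claim $\Pr[\mathrm{succ}(b)]\ge 3/4$. The correctness hypothesis gives $\Pr_\omega\bigl[\hat s_T(\omega)\in[f(\vec x_T)/c,\,c\,f(\vec x_T)]\bigr]\ge 3/4$ for every \emph{fixed} $T$, but your $\Truth^{(b)}=Q_1(\omega)$ depends on the very tape $\omega$ you are averaging over; there is no reason for $\Pr_\omega\bigl[\hat s_{Q_1(\omega)}(\omega)\in I_{Q_1(\omega)}\bigr]\ge 3/4$ to hold. Your remark about ``conditioning on the tape and then undoing the conditioning'' does not bridge this: the event $\mathrm{succ}(b)$ is simply not governed by the per-instance correctness guarantee once the instance is allowed to be a function of the tape.

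To see the gap is fatal rather than cosmetic, run the strong correction scheme for sums (Lemma~\ref{lem:sumC}) separately on each of your four groups with a constant $k=\Theta(1/\eps^2)$ chosen so that $(1\pm\eps)$ is a $c$-approximation, and output the maximum of the four estimates. This scheme is correct on every fixed $\Truth$ with probability at least $3/4$, yet on $\Truth^{(a)}=\Workers_1$ it makes only $k=O(1)$ valid verifications (groups $2,3,4$ contain no valid records, so all those queries are free in the strong model), which is far below $n/(4c^2)$ for large $n$. Tracing your own inequalities against this scheme, one gets $\hat s(\omega)\approx m$ while $q_1(\omega)=k$, so $\Pr[\mathrm{succ}(b)]=0$: the adaptive ``instance'' $\Truth^{(b)}(\omega)$ is precisely the one on which tape $\omega$ is fooled, even though no fixed instance fools a $3/4$ fraction of tapes.

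The paper avoids both issues simultaneously. It takes $n/c^2$ groups of size $c^2$ each, so that estimating every group separately already costs $\Omega(n/c^2)$, and it uses an averaging argument to locate a \emph{fixed} group $A$ that the scheme touches with probability below $1/4$. The two candidate truths it compares, $\Truth_0$ (one record from each group other than $A$) and $\Truth_1=\Truth_0\cup A$, are chosen non-adaptively once $A$ is fixed, so the correctness guarantee applies to each; since $f(\vec x_{\Truth_0})=1$ while $f(\vec x_{\Truth_1})=c^2$, conditioning on the probability-$\ge 3/4$ event that $A$ is untouched yields the contradiction.
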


  A detailed proof of Lemma \ref{lem:maxOfSumC} can be found in Appendix \ref{sec:app:scorr}.

\subsection{From Algorithms using Conditional Sampling to Strong Correction Schemes}

  The design of a strong correction scheme is sometimes a very hard task since the guarantee is very strong. Our main theorem 
in this section shows that there is a nice correspondence of a strong correction scheme with 
\textit{sublinear algorithms using conditional sampling}, a model that has been appeared recently in \cite{GouleakisTZ2017}.
The applications of this framework involve problems expressed in the $d$-dimensional Euclidean space. We state here the main
theorem for this section and we defer the full discussion for Appendix \ref{sec:app:applicationsStrongCorrection} where wee present also
some important applications of this theorem.

\begin{theorem} \label{thm:sCorrection}
    An algorithm that uses $k$ conditional samples to compute a function $f$ can produce a strong correction scheme with 
  verification cost $k$.
\end{theorem}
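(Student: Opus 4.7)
The plan is to give a direct reduction: turn the conditional-sampling algorithm $\mathcal{A}$ into a strong correction scheme by using verifications to simulate each conditional sample it requests. Concretely, suppose $\mathcal{A}$ computes $f(\vec{x}_\Truth)$ using $k$ conditional samples, where each query specifies a subset $S \subseteq \Workers$ and expects back an element drawn uniformly from $\Truth \cap S$. Our scheme runs $\mathcal{A}$ step by step, and whenever $\mathcal{A}$ issues a query on $S$, we produce the requested sample by rejection sampling from $\Workers \cap S$ using the verification oracle.

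The simulation of a single query proceeds as follows. Maintain a set $I$ of records already discovered to be invalid (initially empty). Sample a record $i$ uniformly from $(\Workers \cap S) \setminus I$ and verify it. If $i$ is invalid, add it to $I$; under the strong correction model this verification is free, so we simply resample from $(\Workers \cap S) \setminus I$ and repeat. If $i$ is valid, we pass $i$ to $\mathcal{A}$ as the answer to its query and charge one verification against the budget. Since every sample we draw from $(\Workers \cap S) \setminus I$ is either accepted (and valid) or moved into $I$, the loop terminates almost surely, and by the standard rejection sampling argument the returned record is distributed uniformly over $\Truth \cap S$, exactly as the conditional sampling oracle would provide.

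Correctness and counting then follow almost immediately. Because the $k$ accepted samples are distributed identically to the samples $\mathcal{A}$ would receive from a genuine conditional oracle on $\vec{x}_\Truth$, the entire simulation induces the same output distribution as $\mathcal{A}$ running on $\vec{x}_\Truth$, so the scheme computes $f(\vec{x}_\Truth)$ with the same guarantees. The verification cost charged to our scheme is exactly one per query, for a total of $k$; all the extra verifications expended on invalid samples are absorbed by the strong correction model's rule that invalid verifications are uncounted.

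The only delicate point, which is what the proof should make explicit, is the distributional equivalence between the rejection-based samples and true conditional samples, together with a corner case: if $\Truth \cap S = \emptyset$ for some query, the rejection loop exhausts $\Workers \cap S$ and we simply return the same failure signal that $\mathcal{A}$'s oracle would return on an empty conditioning set, so behavior on this boundary case matches as well. I expect the main obstacle to be formalizing that the simulation is a faithful coupling of the two processes rather than any quantitative estimate.
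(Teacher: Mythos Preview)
Your proposal is correct and essentially identical to the paper's proof: both simulate each conditional-sample query by rejection sampling among the records satisfying the condition, using the verification oracle to accept valid records and discard invalid ones for free, so that exactly one charged verification is spent per query. The only cosmetic difference is that the paper implements the rejection loop by randomly permuting all records once and scanning them in order (skipping those failing the predicate $C$), which is equivalent to your sample-without-replacement from $(\Workers\cap S)\setminus I$; your write-up is actually a bit more explicit than the paper's about the distributional coupling and the $\Truth\cap S=\emptyset$ corner case.
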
 
  \section*{Acknowledgements}
  The authors were supported by NSF CCF-1551875, CCF-1617730, CCF-1733808, and
  IIS-1741137.

  \bibliographystyle{alpha}
  \bibliography{ZAMPETAKIS18}

\newcommand{\etalchar}[1]{$^{#1}$}
\begin{thebibliography}{KKKMF11}

\bibitem[ABC12]{Pitbull12}
ABCNews.
\newblock Rapper {P}itbull visits {K}odiak
  \url{http://abcnews.go.com/blogs/entertainment/2012/08/rapper-pitbull-visits-kodiak-alaska-and-gets-bear-repellent/}.
\newblock 2012.

\bibitem[ACCL08]{AilonCCSL08}
Nir Ailon, Bernard Chazelle, Seshadhri Comandur, and Ding Liu.
\newblock Property-preserving data reconstruction.
\newblock {\em Algorithmica}, 51(2):160--182, April 2008.

\bibitem[BGJ{\etalchar{+}}12]{BhattacharyyaGJJRW12}
Arnab Bhattacharyya, Elena Grigorescu, Madhav Jha, Kyomin Jung, Sofya
  Raskhodnikova, and David~P. Woodruff.
\newblock Lower bounds for local monotonicity reconstruction from
  transitive-closure spanners.
\newblock {\em SIAM Journal on Discrete Mathematics}, 26(2):618--646, 2012.

\bibitem[BLR90]{BlumLR90}
M.~Blum, M.~Luby, and R.~Rubinfeld.
\newblock Self-testing/correcting with applications to numerical problems.
\newblock In {\em Proceedings of the Twenty-second Annual ACM Symposium on
  Theory of Computing}, STOC '90, pages 73--83, New York, NY, USA, 1990. ACM.

\bibitem[CGR16]{CanonneGR16}
Clement~L. Canonne, Themis Gouleakis, and Ronitt Rubinfeld.
\newblock Sampling correctors.
\newblock In {\em Proceedings of the 2016 ACM Conference on Innovations in
  Theoretical Computer Science}, ITCS '16, pages 93--102, New York, NY, USA,
  2016. ACM.

\bibitem[CRS14]{CanonneRS14}
Cl{\'{e}}ment~L. Canonne, Dana Ron, and Rocco~A. Servedio.
\newblock Testing equivalence between distributions using conditional samples.
\newblock In {\em Proceedings of the Twenty-Fifth Annual {ACM-SIAM} Symposium
  on Discrete Algorithms, {SODA} 2014}, pages 1174--1192, 2014.

\bibitem[CSV17]{CharikarSG17}
Moses Charikar, Jacob Steinhardt, and Gregory Valiant.
\newblock Learning from untrusted data.
\newblock In {\em Proceedings of the 49th Annual ACM SIGACT Symposium on Theory
  of Computing}, pages 47--60. ACM, 2017.

\bibitem[DKK{\etalchar{+}}16]{DKK+16}
Ilias Diakonikolas, Gautam Kamath, Daniel~M. Kane, Jerry Li, Ankur Moitra, and
  Alistair Stewart.
\newblock Robust estimators in high dimensions without the computational
  intractability.
\newblock In {\em {IEEE} 57th Annual Symposium on Foundations of Computer
  Science, {FOCS} 2016, 9-11 October 2016, Hyatt Regency, New Brunswick, New
  Jersey, {USA}}, pages 655--664, 2016.

\bibitem[DKK{\etalchar{+}}17]{DKK+17}
Ilias Diakonikolas, Gautam Kamath, Daniel~M Kane, Jerry Li, Ankur Moitra, and
  Alistair Stewart.
\newblock Being robust (in high dimensions) can be practical.
\newblock {\em arXiv preprint arXiv:1703.00893}, 2017.

\bibitem[DKK{\etalchar{+}}18]{DKK+18}
Ilias Diakonikolas, Gautam Kamath, Daniel~M. Kane, Jerry Li, Ankur Moitra, and
  Alistair Stewart.
\newblock Robustly learning a gaussian: Getting optimal error, efficiently.
\newblock In {\em Proceedings of the Twenty-Ninth Annual {ACM-SIAM} Symposium
  on Discrete Algorithms, {SODA} 2018, New Orleans, LA, USA, January 7-10,
  2018}, pages 2683--2702, 2018.

\bibitem[DKS17]{DKS17}
Ilias Diakonikolas, Daniel~M Kane, and Alistair Stewart.
\newblock Statistical query lower bounds for robust estimation of
  high-dimensional gaussians and gaussian mixtures.
\newblock In {\em Foundations of Computer Science (FOCS), 2017 IEEE 58th Annual
  Symposium on}, pages 73--84. IEEE, 2017.

\bibitem[DRH11]{DoanRH2011}
Anhai Doan, Raghu Ramakrishnan, and Alon~Y Halevy.
\newblock Crowdsourcing systems on the world-wide web.
\newblock {\em Communications of the ACM}, 54(4):86--96, 2011.

\bibitem[FTZ16]{FotakisTZ16}
Dimitris Fotakis, Christos Tzamos, and Manolis Zampetakis.
\newblock Mechanism design with selective verification.
\newblock In {\em Proceedings of the 2016 ACM Conference on Economics and
  Computation}, pages 771--788. ACM, 2016.

\bibitem[GGR96]{GGR96}
Oded Goldreich, Shafi Goldwasser, and Dana Ron.
\newblock Property testing and its connection to learning and approximation.
\newblock In {\em 37th Annual Symposium on Foundations of Computer Science,
  {FOCS} '96, Burlington, Vermont, USA, 14-16 October, 1996}, pages 339--348,
  1996.

\bibitem[GTZ17]{GouleakisTZ2017}
Themistoklis Gouleakis, Christos Tzamos, and Manolis Zampetakis.
\newblock Faster sublinear algorithms using conditional sampling.
\newblock In {\em Proceedings of the Twenty-Eighth Annual ACM-SIAM Symposium on
  Discrete Algorithms}, pages 1743--1757. SIAM, 2017.

\bibitem[HRRS80]{hampel80}
Frank~R Hampel, Peter~J Rousseeuw, Elvezio~M Ronchetti, and Werner~A Stahel.
\newblock Robust statistics: the approach based on influence functions.
\newblock 1980.

\bibitem[Hub11]{huber11}
Peter~J Huber.
\newblock Robust statistics.
\newblock In {\em International Encyclopedia of Statistical Science}, pages
  1248--1251. Springer, 2011.

\bibitem[JR11]{JhaR11}
M.~Jha and S.~Raskhodnikova.
\newblock Testing and reconstruction of lipschitz functions with applications
  to data privacy.
\newblock In {\em 2011 IEEE 52nd Annual Symposium on Foundations of Computer
  Science}, pages 433--442, Oct 2011.

\bibitem[KKKMF11]{KazaiKKM2011}
Gabriella Kazai, Jaap Kamps, Marijn Koolen, and Natasa Milic-Frayling.
\newblock Crowdsourcing for book search evaluation: impact of hit design on
  comparative system ranking.
\newblock In {\em Proceedings of the 34th international ACM SIGIR conference on
  Research and development in Information Retrieval}, pages 205--214. ACM,
  2011.

\bibitem[LR02]{book:imputation}
Roderick J.~A. Little and Donald~B. Rubin.
\newblock {\em Statistical Analysis with Missing Data}.
\newblock Wiley Series in Probability and Statistics. John Wiley \& Sons, 2002.
\newblock Second edition.

\bibitem[LRV16]{LRV16}
Kevin~A Lai, Anup~B Rao, and Santosh Vempala.
\newblock Agnostic estimation of mean and covariance.
\newblock In {\em Foundations of Computer Science (FOCS), 2016 IEEE 57th Annual
  Symposium on}, pages 665--674. IEEE, 2016.

\bibitem[Rub87]{Rubin:87}
Donald~B. Rubin.
\newblock {\em Multiple imputation for nonresponse in surveys}.
\newblock John Wiley \& Sons, 1987.

\bibitem[Sch97]{book:incomplete:data}
Joseph~L. Schafer.
\newblock {\em Analysis of incomplete multivariate data}.
\newblock CRC press, 1997.

\bibitem[SS10]{SacksS10}
Michael Saks and C.~Seshadhri.
\newblock Local monotonicity reconstruction.
\newblock {\em SIAM Journal on Computing}, 39(7):2897--2926, 2010.

\bibitem[SVC16]{SVC16}
Jacob Steinhardt, Gregory Valiant, and Moses Charikar.
\newblock Avoiding imposters and delinquents: Adversarial crowdsourcing and
  peer prediction.
\newblock In {\em Advances in Neural Information Processing Systems 29: Annual
  Conference on Neural Information Processing Systems 2016, December 5-10,
  2016, Barcelona, Spain}, pages 4439--4447, 2016.

\bibitem[VdVE11]{VuurensVE2011}
Jeroen Vuurens, Arjen~P de~Vries, and Carsten Eickhoff.
\newblock How much spam can you take? an analysis of crowdsourcing results to
  increase accuracy.
\newblock In {\em Proc. ACM SIGIR Workshop on Crowdsourcing for Information
  Retrieval (CIR’11)}, pages 21--26, 2011.

\bibitem[WLC{\etalchar{+}}10]{WaisLCFGLMS2010}
Paul Wais, Shivaram Lingamneni, Duncan Cook, Jason Fennell, Benjamin
  Goldenberg, Daniel Lubarov, David Marin, and Hari Simons.
\newblock Towards building a high-quality workforce with mechanical turk.
\newblock {\em Proceedings of computational social science and the wisdom of
  crowds (NIPS)}, pages 1--5, 2010.

\end{thebibliography}
  \appendix

  \section{Applications of Theorem \ref{thm:wcont}}

\subsection{Optimal Travelling Salesman Tour} \label{ssec:tspIOpt}

In this section we examine the \textit{metric travelling salesman problem} where we are given $n$ points (each provided by one record in $\Workers$) in a metric space
$\mathcal{X}$ and we wish to find the length of the minimum cycle going through each point in the set $\Truth\subseteq \Workers$ of correct answers. As usual we let
$\vec{x}_\Workers$ be the input vector with record values whose coordinates are points in the metric space $\mathcal{X}$. Our goal is to find a certification
scheme for this metric travelling salesman problem. That is, the algorithm should either output a sufficiently accurate value (according to \ref{eq:verH0}) for the
minimum weight cycle going through the points in $\vec{x}_\Truth$  or find a invalid record \footnote{Note that throughout this paper we don't consider the
computational complexity of the problems, since we are more interested in the number of verifications needed. Besides that in the case of Euclidean TSP we could
use the $(1 + \eps)$-approximation algorithm that we know in order to get similar results and avoid $\NP$-completeness.}. The following lemma combined with Theorem \ref{thm:wcont} give us
the desired result.

\begin{lemma}\label{lm:tsp}
    Let $f : \Domain^* \to \reals$ be the function mapping a set of points in a metric space $\mathcal{X}$ to their minimum TSP tour and let $v_1v_2\dots v_n$ be the
  minimum TSP tour. Also, let $\vec{w} \in \reals_+^n=(w_1,\dots,w_n)$, where
  $w_i=d(v_{i-1},v_i)+d(v_i,v_{i+1})$ and the second indices are mod $n$. Then, $f$ is $\vec{w}$-continuous.
\end{lemma}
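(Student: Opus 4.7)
The goal is to show $|f(\xw) - f(\vec{x}_{\Workers \setminus S})| \le \sum_{i \in S} w_i$ for every $S \subseteq \Workers$, and I handle the two directions separately. For the \emph{easy direction}, $f(\vec{x}_{\Workers \setminus S}) \le f(\xw)$: starting from the optimal tour $v_1 v_2 \cdots v_n$ on $\Workers$, erase each vertex in $S$ and bypass each maximal run of consecutive $S$-vertices by a single edge between its $(\Workers \setminus S)$-neighbors. The triangle inequality makes every bypass non-increasing, yielding a Hamiltonian cycle on $\Workers \setminus S$ of length at most $f(\xw)$, which already gives $f(\vec{x}_{\Workers \setminus S}) - f(\xw) \le 0 \le \sum_{i \in S} w_i$.

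For the \emph{hard direction}, $f(\xw) \le f(\vec{x}_{\Workers \setminus S}) + \sum_{i \in S} w_i$, I augment the optimal tour $T^*$ on $\Workers \setminus S$ by spike detours that pick up the vertices of $S$. Partition $S$ into maximal blocks of indices consecutive along the original tour; write each such block as $B_j = \{v_{a_j}, \ldots, v_{b_j}\}$, so that both flanking vertices $v_{a_j-1}$ and $v_{b_j+1}$ lie in $\Workers \setminus S$ and serve as candidate anchors. For each block, attach one spike to $T^*$ that, at the chosen anchor, detours along $v_{a_j} \to v_{a_j+1} \to \cdots \to v_{b_j}$ and returns. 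Bounding the return edge $d(v_{b_j}, v_{a_j-1})$ by the chained path $\sum_{i = a_j - 1}^{b_j - 1} d(v_i, v_{i+1})$ via the triangle inequality, the spike anchored at $v_{a_j-1}$ has length at most $2 \sum_{i = a_j - 1}^{b_j - 1} d(v_i, v_{i+1})$, which rearranges to $\sum_{v_i \in B_j} w_i + d(v_{a_j-1}, v_{a_j}) - d(v_{b_j}, v_{b_j+1})$. The analogous bound for the spike anchored at $v_{b_j+1}$ swaps those two terminal edges, so the two bounds average to exactly $\sum_{v_i \in B_j} w_i$ and at least one of the two anchors produces a spike of length at most $\sum_{v_i \in B_j} w_i$. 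Summing over all blocks and concatenating with $T^*$ produces a closed walk on $\Workers$ of length at most $f(\vec{x}_{\Workers \setminus S}) + \sum_{i \in S} w_i$; a final triangle-inequality shortcut of the revisited anchor vertices turns this walk into a Hamiltonian cycle on $\Workers$ of no larger length, closing the argument.

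The main obstacle is the spike accounting: a committed choice of anchor can overshoot its block's $w$-budget by exactly one of the two terminal original-tour edges, so the argument has to exploit the symmetry between the two flanking anchors and always use the \emph{better} side. Boundary cases such as $\Workers \setminus S = \emptyset$, or a single surviving vertex causing the block to wrap around the entire cycle, reduce to the identity $\sum_{i \in \Workers} w_i = 2 f(\xw)$ and only need a brief separate check.
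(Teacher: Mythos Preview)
Your proof is correct and follows essentially the same approach as the paper: partition $S$ into maximal blocks of vertices consecutive along the optimal tour $v_1\cdots v_n$, attach to the optimal tour on $\Workers\setminus S$ an out-and-back spike for each block anchored at the cheaper of its two flanking $R$-vertices, and use $2\min\{l(p_1),l(p_2)\}\le l(p_1)+l(p_2)=\sum_{i\in\text{block}}w_i$ (your ``average of the two bounds'' is the same inequality). The only cosmetic differences are that the paper retraces the spike path rather than returning directly and then invoking the triangle inequality, and that it leaves the monotone direction and the degenerate cases implicit.
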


\begin{proof}
According to definition \ref{def:wcont}, we need to show that for any $S\subseteq\Workers$:

    \begin{equation}\label{tsp:ineq} f(\xw) \le f(\vec{x}_{\Workers\setminus S}) + \sum_{i \in S} w_i
    \end{equation}
To see why this inequality is satisfied, 
let $T_R$ be the minimum TSP tour going through the points in $R=\Workers\setminus S$ and $T_\Workers=v_1v_2\dots v_n$ be the minimum TSP tour that goes through all the points in the
set $\Workers\supseteq R$. Now let $j_1<j_2\dots <j_r$ be the indices at which the points of the set $R$ appear in this TSP tour.  Consider two consecutive points
$v_{j_k},v_{j_{k+1}}$ in this sequence and let $P_k=\{v_{j_k+1},v_{j_k+2},\dots, v_{j_{k+1}-1}\}$ be the set of consecutive points in the tour $T_\Workers$ between
$v_{j_k}$ and $v_{j_{k+1}}$. Clearly, $\forall k: P_k\subseteq  S$ and therefore the weights of those points appear in the sum that is in the rhs of
equation \eqref{tsp:ineq}. Now consider the two paths $p_{1,k}=v_{j_k},v_{j_k+1},\dots, v_{j_{k+1}-1}$ and $p_{2,k}=v_{j_k+1},v_{j_k+1},\dots, v_{j_{k+1}}$ which are
both part of $T_\Workers$. We have that:
\[
\sum_{i\in P_k} w_i = d(v_{j_k},v_{j_k+1})+d(v_{j_{k+1}-1},v_{j_{k+1}})+2\cdot \sum_{i=j_k+1}^{j_{k+1}-2} d(v_i,v_{i+1})=l(p_{1,k})+l(p_{2,k})
\]
where $l(\cdot)$ denotes the length of a path. 
We now consider the walk that goes through all the vertices in $\Workers$ and has the following two properties:
\begin{itemize}
  \item It respects the order in which the vertices in $R$ are visited by $T_R$
  \item Between any two consecutive such vertices, it follows whichever path among $p_{1,k}$ and $p_{2,k}$ has smaller length in the forward and then backwards
direction.
\end{itemize}

We know that $f(\xw)$ smaller or equal to the walk we have just defined, since the walk goes through all the given points and even repeats the points in
$R$ \footnote{Since we are working on a metric space, skipping points in the order that we visit them can only decrease the cost.}. Thus,

\begin{align*}
 f(\xw) &\le f(\vec{x}_{\Workers\setminus S}) + \sum_{k=1}^s 2\cdot \min\{(d(v_{j_k},v_{j_k+1}),d(v_{j_{k+1}-1})\}+ 2\cdot \sum_{i=j_k+1}^{j_{k+1}-2} d(v_i,v_{i+1}) \\
 &\le f(\vec{x}_{\Workers\setminus S}) + \sum_{k=1}^s\sum_{i\in P_k}w_i\\
 &\le f(\vec{x}_{\Workers\setminus S}) + \sum_{i \in  S} w_i
\end{align*}
\end{proof}

Using lemma \ref{lm:tsp} and theorem \ref{thm:wcont}, we get the following corollary:

\begin{corollary}
    Let $f : \Domain^* \to \reals$ be the function mapping a set of points in a metric space $\mathcal{X}$ to their minimum TSP tour. Then, there exists a
  verification scheme that uses at most $O(\frac{1}{\eps}\log(\frac{1}{\delta}))$ verifications per correction.
\end{corollary}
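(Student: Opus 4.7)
The plan is to obtain the corollary as a direct consequence of Lemma \ref{lm:tsp} combined with Theorem \ref{thm:wcont}. Lemma \ref{lm:tsp} already hands us a $\vec{w}$-Lipschitz bound for the TSP objective with explicit weights $w_i = d(v_{i-1},v_i) + d(v_i,v_{i+1})$, where $v_1 v_2 \cdots v_n$ is the optimal TSP tour on the reported point set $\xw$. Theorem \ref{thm:wcont} then converts any $\vec{w}$-Lipschitz bound into a certification scheme whose verification complexity is $\frac{4 \sum_{i \in \Workers} w_i}{3\, f(\xw)\, \eps} \log(1/\delta)$. So the only thing left to do is to estimate the ratio $\sum_i w_i / f(\xw)$.

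The key computation is that each edge of the optimal tour contributes to exactly two of the weights $w_i$: the edge $(v_{i-1}, v_i)$ appears in $w_{i-1}$ (as the ``outgoing'' edge from $v_{i-1}$) and in $w_i$ (as the ``incoming'' edge at $v_i$). Therefore
\[
\sum_{i=1}^{n} w_i \;=\; \sum_{i=1}^{n} \bigl(d(v_{i-1},v_i) + d(v_i, v_{i+1})\bigr) \;=\; 2 \sum_{i=1}^{n} d(v_i, v_{i+1}) \;=\; 2\, f(\xw),
\]
with indices taken modulo $n$. Substituting this identity into the bound of Theorem \ref{thm:wcont} gives verification complexity
\[
\frac{4 \cdot 2\, f(\xw)}{3\, f(\xw)\, \eps}\,\log(1/\delta) \;=\; \frac{8}{3\eps}\log(1/\delta) \;=\; O\!\left(\tfrac{1}{\eps}\log(1/\delta)\right),
\]
which is precisely the claimed bound.

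There is essentially no obstacle beyond these two lines of bookkeeping; the nontrivial work is already absorbed into Lemma \ref{lm:tsp} (the Lipschitz inequality for TSP) and Theorem \ref{thm:wcont} (the generic LP-rounding argument). The only subtlety worth double-checking is that the weights $w_i$ in Lemma \ref{lm:tsp} are defined using the optimum tour on the full reported set $\xw$, which is exactly the quantity the certifier has access to when computing $f(\xw)$, so the sampling distribution $p_i \propto w_i$ required by Theorem \ref{thm:wcont} can in fact be constructed without any oracle access to $\Truth$.
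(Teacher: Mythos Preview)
Your proposal is correct and follows essentially the same approach as the paper: apply Lemma~\ref{lm:tsp} together with Theorem~\ref{thm:wcont}, and observe that each edge of the optimal tour is counted exactly twice in $\sum_i w_i$, so $\sum_i w_i = 2 f(\xw)$. The paper's proof is in fact a terser version of exactly this argument.
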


\begin{proof}
This is a straightforward application of lemma \ref{lm:tsp} and theorem \ref{thm:wcont} since $\sum_{i\in \Workers} w_i$ contains each of the edges in the optimal TSP tour $T_\Workers$ exactly twice. Thus,
\[
\sum_{i\in \Workers} w_i=2f(\vec{x}_\Workers)
\]
\end{proof}
 
\subsection{Steiner tree} \label{ssec:steinerIOpt}

In the classic Steiner tree problem, the input is a positively weighted graph $G=(V,E,w)$ and the set of vertices $V$ is partitioned into two disjoint sets $T$ and
$U$ such that $V=T\cup U$. Usually $T$ is called the set of \emph{terminal} nodes and $U$ the set of \emph{Steiner} nodes. The goal is to compute a connected
subgraph of $G$ that has the smallest possible weight and has a set of vertices $T\subseteq V^\prime \subseteq V$ that includes all \emph{terminal} nodes and any number of steiner nodes.

  Here, we are going to examine the Steiner tree problem in the following setting: We are given a fixed graph $G=(V,E)$ on $\vert V\vert$ vertices and we also have
$\vert \Workers\vert$ values from the set of records $\Workers$. Each record is a node from the set $V$ claiming that this node is in the set $T\subseteq V$ of
terminal nodes that need to be connected by the tree. However, the records might be invalid and the algorithm is allowed to do verifications on those records.
Let $\vec{x}_\Workers$ be the input vector whose coordinates are vertices claimed to be in the set $T$ of terminal nodes. Similarly, let $\vec{x}_A$ be a vector
containing only a subset $A\subseteq \Workers$ of those vertices. Our goal is again to be able to either output a sufficiently accurate answer for the cost of the
optimal Steiner tree of find an invalid record.

As in the previous section we are going to use theorem \ref{thm:wcont} to achieve this. The conditions of theorem \ref{thm:wcont} are satisfied in this case due to
the following lemma:

\begin{lemma}\label{lm:steiner}
    Let $G=V,E$ be a graph and $f_G : V^* \to \reals$ be the function mapping a set of vertices $T\subseteq V$ to the minimum cost of a steiner tree connecting the
  vertices in $T$.  Then, there exists a vector $\vec{w}\in \reals_+^n$ such that $f$ is $\vec{w}$-continuous and also
  $\sum_{i\in \Workers}w_i = O(f_G(\vec{x}_\Workers))$.
\end{lemma}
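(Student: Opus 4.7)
The plan is to mimic the strategy used for the TSP lemma, using a DFS Euler tour of the optimal Steiner tree of the full input to produce both a cyclic ordering of the terminals and a suitable weight vector $\vec{w}$.

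Concretely, I would first fix an optimal Steiner tree $T^{*}$ of $\vec{x}_\Workers$ in $G$ and run a DFS from any vertex of $T^*$, producing a closed walk $W$ in $G$ that uses only edges of $T^*$ and traverses each such edge exactly twice, so $\mathrm{len}(W)=2 f_G(\vec{x}_\Workers)$. Ordering the terminals $v_1,v_2,\dots,v_n$ by their first appearance in $W$ gives a cyclic ordering, and writing $W_i$ for the subwalk of $W$ from the first visit of $v_i$ to the first visit of $v_{i+1}$ (indices mod $n$) we get $\sum_{i=1}^n \mathrm{len}(W_i)=2 f_G(\vec{x}_\Workers)$. I would then set $w_i=\mathrm{len}(W_{i-1})+\mathrm{len}(W_i)$, which immediately yields $\sum_{i\in\Workers} w_i=2\sum_i \mathrm{len}(W_i)=4 f_G(\vec{x}_\Workers)$, delivering the desired $\sum_{i\in\Workers}w_i=O(f_G(\vec{x}_\Workers))$.

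The main content is the $\vec{w}$-Lipschitz bound. Fix $S\subseteq\Workers$ and set $R=\Workers\setminus S$; by monotonicity of Steiner tree cost under adding terminals, $f_G(\vec{x}_R)\le f_G(\vec{x}_\Workers)$, so it suffices to show $f_G(\vec{x}_\Workers)\le f_G(\vec{x}_R)+\sum_{i\in S}w_i$. Assume $R\neq\emptyset$ (when $R=\emptyset$ the whole Euler walk itself connects $\vec{x}_\Workers$ with cost $2 f_G(\vec{x}_\Workers)\le \sum_{i\in\Workers}w_i$). Let $v_{j_1},\dots,v_{j_r}$ be the cyclic subsequence of $R$-indices. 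Starting from an optimal Steiner tree $T_R$ of $R$, for each gap between consecutive $R$-indices $v_{j_k}$ and $v_{j_{k+1}}$ I would attach the sequence of subwalks $W_{j_k},W_{j_k+1},\dots,W_{j_{k+1}-2}$, whose concatenation is a walk in $G$ starting at $v_{j_k}\in T_R$ and visiting in turn every $S$-vertex $v_{j_k+1},\dots,v_{j_{k+1}-1}$ of the gap. The resulting subgraph of $G$ is connected and contains all of $\vec{x}_\Workers$, so any spanning tree of it upper-bounds $f_G(\vec{x}_\Workers)$.

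It remains to account for the added weight. For gap $k$, the added cost is at most $\sum_{j=j_k}^{j_{k+1}-2}\mathrm{len}(W_j)$, while the contribution of the $S$-vertices $v_{j_k+1},\dots,v_{j_{k+1}-1}$ to $\sum_{i\in S}w_i$ equals $\mathrm{len}(W_{j_k})+2\mathrm{len}(W_{j_k+1})+\dots+2\mathrm{len}(W_{j_{k+1}-2})+\mathrm{len}(W_{j_{k+1}-1})$, since each interior $W_j$ is charged by both of its endpoint terminals (which lie in $S$) while the two boundary $W_j$'s are charged only once. The former is clearly bounded by the latter, so summing over all gaps yields the needed $f_G(\vec{x}_\Workers)\le f_G(\vec{x}_R)+\sum_{i\in S}w_i$. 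The main obstacles are purely bookkeeping ones: making sure the DFS ordering and the partition of $W$ into the $W_i$'s are defined unambiguously (first-visit convention, cyclic wrap-around, handling terminals that may coincide as graph vertices), and verifying the per-gap double-counting inequality above; the inequality itself is then immediate.
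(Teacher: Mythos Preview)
Your proof is correct and follows essentially the same strategy as the paper: both define the weights via an Euler tour of the optimal Steiner tree on $\Workers$, setting $w_i$ to be the sum of the two adjacent tour-segment lengths, and then bound $f_G(\vec x_\Workers)-f_G(\vec x_R)$ by attaching, for each maximal run of $S$-vertices between consecutive $R$-vertices $v_{j_k},v_{j_{k+1}}$, a portion of that Euler tour and charging its length to $\sum_{i\in P_k} w_i$.

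The only substantive difference is in how the attachment is done. The paper works with the Euler tour $U_R$ of the doubled tree $t_R$ and inserts a round-trip detour at $v_{j_k}$, invoking the ``two paths'' trick from the TSP lemma (go out and back along the shorter of the two boundary paths) to keep the detour cost within $\sum_{i\in P_k}w_i$; this yields $\vec w/2$-continuity and $\sum_i w_i'=2f_G(\vec x_\Workers)$. You instead graft the one-way walk $W_{j_k}\cdots W_{j_{k+1}-2}$ directly onto the tree $T_R$, which is conceptually simpler (no doubling of $T_R$, no two-paths argument) but gives $\sum_i w_i=4f_G(\vec x_\Workers)$, losing a factor of $2$. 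Since the lemma only asks for $O(f_G(\vec x_\Workers))$, both are fine.
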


\begin{proof}
We need to show that there exists a vector $\vec{w}\in \reals_+^n=(w_1,\dots,w_n)$, such that for any $S\subseteq\Workers$, the following inequality holds:

 \begin{equation}\label{steiner:ineq} f(\xw) \le f(\vec{x}_{\Workers\setminus S}) + \sum_{i \in S} w_i
 \end{equation}

We start by introducing some notation. Let $t$ be a tree subgraph of $G$. We denote by $H_t$ the Eulerian graph that results when we double each edge in $t$. Also,
let $t_A$ denote the optimal Steiner tree for the set $A\subseteq V$ of terminal nodes. Thus, $\forall A: f(\vec{x}_A)=cost(t_A)$.

Now let $t_R$ be the optimal Steiner tree for some set $R=\Workers\setminus S\subseteq V$ of terminal nodes. In order to show equation \eqref{steiner:ineq}, it suffices to show that
there exists a tree $t$ and a vector $\vec{w}\in \reals_+^n$, such that $t$ is a valid Steiner tree for the set $\Workers$ of terminal nodes and its cost is:
$cost(t)\le cost(t_R)+\sum_{i\in S}w_i$.

In other words, we would like to find a weight vector $\vec{w}\in \reals_+^n$, such that starting from the Steiner tree $t_R$ and using the weight assigned to the
set $ S=\Workers\setminus R$ as budget, we are able to construct a Steiner tree the \emph{covers} the set $\Workers$. To keep the number of verifications low, we also
require this vector to be such that $\sum_{i\in \Workers}w_i =O(f_G(\vec{x}_\Workers))$.

Now fix a specific Euler tour (i.e an ordering of the nodes) $U_\Workers$ for the graph $H_{t_\Workers}$ and also fix an Euler tour  $U_R$ for the graph $H_{t_R}$.
Note that the cost of each Euler tour is exactly twice the cost of the corresponding Steiner tree (e.g $cost(U_R)=2cost(t_R)$ where $cost(\cdot)$ denotes the sum of
weights of all edges in the Euler tour or the tree).

We define each weight $w_i$ to be the length of the path from the predecessor to the successor of node $i$ in the ordering $U_\Workers$.

Our goal is to find a new Euler tour which directly corresponds to a valid Steiner tree \footnote{That is, the traversing each edge of that tree twice and in
opposite directions.} for the set $\Workers$ and is within our budget $\sum_{i\in S}w_i$.

Now let $U_\Workers=v_1v_2\dots v_n$ be the ordering in which the terminal nodes are visited in the Euler tour of $H_{t_\Workers}$ and $j_1<j_2\dots <j_r$ be the
indices at which the points of the set $R=\Workers\setminus S$ appear in this Euler tour.  Consider two consecutive points $v_{j_k},v_{j_{k+1}}$ in this sequence and let
$P_k=\{v_{j_k+1},v_{j_k+2},\dots, v_{j_{k+1}-1}\}\subseteq  S$ be the set of consecutive points in the Euler tour $U_\Workers$ between $v_{j_k}$ and
$v_{j_{k+1}}$. Note that the sets $P_k$ are mutually disjoint and therefore: $\sum_{k=1}^r \sum_{i\in P_k}w_i \le \sum_{i\in  S}w_i$. Also,
$\sum_{i\in P_k}w_i$ is enough budget to add the set of nodes $P_k$ in the ordering $U_R$ between $v_{j_k}$ and $v_{j_{k+1}}$. \footnote{To be more precise here, we
need an argument similar to the two paths argument in the proof of lemma \ref{lm:tsp}.}  By repeating this for all $k\in [r]$, we get the desired Steiner tree $t$
that \emph{covers} all nodes in $\Workers$ and is such that:

\[
2\cdot cost(t_\Workers)\le 2\cdot cost(t) \le 2\cdot cost(t_R) +   \sum_{i\in  S}w_i \Rightarrow
\]

\[
 cost(t_\Workers)\le  cost(t_R) +   \sum_{i\in  S}\frac{w_i}{2} \Leftrightarrow
\]

\[
 f(\vec{x}_\Workers)\le  f(\vec{x}_{\Workers\setminus S}) +   \sum_{i\in S}w_i^\prime
\]where $w_i^\prime=\frac{w_i}{2}$.

Thus, $f$ is $\frac{\vec{w}}{2}$-continuous and also $\sum_{i\in \Workers}w_i^\prime = \frac{1}{2}\cdot 2\cdot cost(U_\Workers)=2\cdot f(\vec{x}_\Workers)$.
\end{proof}

The following corollary is a direct application of  lemma \ref{lm:steiner} and theorem \ref{thm:wcont}:

\begin{corollary}
Let $G=V,E$ be a graph and $f_G : V^* \to \reals$ be the function mapping a set of vertices $T\subseteq V$ to the minimum cost of a steiner tree connecting the
vertices in $T$. Then, there exists a verification scheme that uses at most $O(\frac{1}{\eps}\log(\frac{1}{\delta}))$ verifications per correction.
\end{corollary}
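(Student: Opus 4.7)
The plan is to combine Lemma \ref{lm:steiner} with Theorem \ref{thm:wcont} in a single step, since the hard structural work has already been done. By Lemma \ref{lm:steiner}, there is a weight vector $\vec{w}' \in \reals_+^n$ such that $f_G$ is $\vec{w}'$-Lipschitz and moreover $\sum_{i \in \Workers} w'_i \le 2\, f_G(\vec{x}_\Workers)$. The crucial ratio to extract from the lemma is
\[
\frac{\sum_{i \in \Workers} w'_i}{f_G(\vec{x}_\Workers)} \le 2,
\]
which comes from the observation that, for the Euler tour $U_\Workers$ of the doubled optimal Steiner tree $H_{t_\Workers}$, each edge of the tour contributes to exactly two node weights (once as outgoing edge of its tail, once as incoming edge of its head). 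Hence $\sum_i w_i = 2 \cdot \mathrm{cost}(U_\Workers) = 4\, f_G(\vec{x}_\Workers)$, and the halving $w'_i = w_i/2$ inside Lemma \ref{lm:steiner} yields the factor $2$.

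Now I would instantiate Theorem \ref{thm:wcont} with the function $f_G$ and the weight vector $\vec{w}'$ provided by Lemma \ref{lm:steiner}. The theorem guarantees a certification scheme whose verification complexity is at most
\[
\frac{4\sum_{i \in \Workers} w'_i}{3\, f_G(\vec{x}_\Workers)\, \eps}\, \log(1/\delta) \;\le\; \frac{4 \cdot 2\, f_G(\vec{x}_\Workers)}{3\, f_G(\vec{x}_\Workers)\, \eps}\, \log(1/\delta) \;=\; \frac{8}{3\eps}\, \log(1/\delta) \;=\; O\!\left(\tfrac{1}{\eps}\log(1/\delta)\right),
\]
which is exactly the bound claimed by the corollary. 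The only thing that might look like an obstacle here is the scale-invariance of the argument, namely checking that Lemma \ref{lm:steiner} really delivers the $O(f_G(\vec{x}_\Workers))$ bound on $\sum_i w'_i$ with a universal constant (no hidden dependence on $|V|$ or on the graph weights). This is immediate from the accounting above: the sum of weights over all records equals a constant multiple of the cost of the Euler tour of the optimal Steiner tree on $\Workers$, which in turn is a constant multiple of $f_G(\vec{x}_\Workers)$. Hence the corollary follows directly, with no additional computation beyond substituting the ratio $2$ into the bound of Theorem \ref{thm:wcont}.
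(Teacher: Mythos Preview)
Your proposal is correct and follows exactly the same route as the paper: the corollary is obtained by plugging the weight vector and the bound $\sum_{i\in\Workers} w'_i = 2\, f_G(\vec{x}_\Workers)$ from Lemma~\ref{lm:steiner} into Theorem~\ref{thm:wcont}. The paper states this in a single sentence without writing out the arithmetic, but your computation of the constant $\tfrac{8}{3\eps}\log(1/\delta)$ and your accounting for why the Euler-tour weights sum to $2\cdot\mathrm{cost}(U_\Workers)$ are both accurate and match the intended argument.
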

 
\section{Proof of Theorem~\ref{thm:reduction}}\label{app:weak}

We will now remove the assumption we used earlier about the function $f$ being increasing in order to design a weak correction scheme.
We are going to use the same random walk based correction scheme as in Section~\ref{sec:wcorr} that starts at $C$ and ends at 0. However, instead of outputting the result of the function $f$ on the final subset of records (after all deletions), we will consider every possible intermediate subset of records during the random walk as a candidate for producing an $(1+\varepsilon)$-approximate solution. Note that, at each step $i$ of the random walk, we run a certification scheme on some set $S_i\subseteq \Workers$. We define a subset $S\subseteq\Workers$ to be ``bad'' if $\frac{f(\vec{x}_{\Truth})}{f(\vec{x}_S)} \not\in \left[ 1 - \eps, \frac{1}{1 - \eps} \right]$ and to be ``good'' otherwise.

By the definition of the certification scheme, if the set $S$ is ``bad'', then an invalid record is found with probability at least $2/3$, in which case the random walk moves to the right. Otherwise, we do not have any guarantee on how the random walk will behave.

However, if at all steps the probability of finding an invalid record is more than $3/5$, then the probability that the random walk reaches 0 is less than $(\frac {2/5} {3/5})^C = (\frac 2 3)^C < \delta/2$ for $C=O(\log(1/\delta))$. Thus given that we returned, with high probability, there must be some set $S_i$ for which the correction scheme accepts with probability more than $2/5$. Note that, this can only be true if the set $S_i$ is good since $2/5 > 1/3$.

At this point, given a list of these subsets, our goal is to find a ``good'' subset for which the certification scheme accepts with probability more than $1/3$. We know that a ``good'' subset exists for which the acceptance probability is more than $2/5$. We view the certification process for a subset $S$ as sampling from a Bernoulli random variable. We say that a set $S$ has probability $p$ if the certification process on the set $S$ does not find an invalid record with probability $p$.

Let $Test(S,\gamma)$ be a test that accepts if the probability of a set $S$ is more than $2/5$ (call such a set ``very good'') and rejects if it is less than $1/3$. Such a test fails with probability $\gamma$ requiring $O(\log(1/\gamma))$ samples.

The main idea behing this algorithm is to iteratively run $Test(S,\gamma)$ for all candidate subsets $S$ with varying error probabilities $\gamma$ to throw out the failing ones until a significant fraction of the subsets in our pool is ``good''.  When this happens, we pick a subset at random and check if it is actually ``good'' by running $Test(S,\gamma)$ with small $\gamma$. We repeat this until we actually find a good subset and output the value on the function $f$ on that subset. To ensure that this will eventually happen, we choose parameters appropriately, so that a constant fraction of the ``bad'' subsets fail while the ``good'' subsets pass the certifications with high enough probability.

Let $K$ be the number of candidate subsets $S_i$. We have that $K$ is equal to the number of invalid records found during the random walk process.

Our algorithm proceeds in rounds until there are at most $K / \log K$ sets remaining. In the $t$-th round:
\begin{itemize}
  \item The algorithm runs $Test(S_i,10^{-t})$ for every set $S_i$ and discards all sets that fail.
  \item If the number of remaining sets is did not drop by a factor of 2 the algorithm stops and returns a set $S_i$ uniformly at random from the remaining sets.
\end{itemize}
If the algorithm has not returned after $\log \log K$ steps, then it runs $Test(S_i,1/K^2)$ for every remaining set $S_i$ and returns one that passes the test.

The proposed algorithm returns a ``good'' set with probability more than $3/5 - o(1)$.
First, notice that a ``very good'' set will be discarded in the first $\log \log K$ rounds with probability at most $\sum_t 10^{-t} \le \frac {1/10} {1 - 1/10} = 1/9$.
Hence, if the algorithm did not return after $\log \log K$ rounds, the last step returns a ``good'' set with high probability.

Now, suppose the algorithm returns at some round $t$. Let $K_{t-1}$ be the total remaining sets before round $t$. The probability that the number $B_t$ of ``bad'' sets remaining after round $t$ to be more than $K_{t-1}/5$ is at most:
$$\Prob\left(B_t > K_{t-1}/5\right) \le \exp(-B_{t-1}/10) \le \exp(-K_{t-1}/50) \le \exp(-K/(50 \log K))$$
This is an exponentially small probability and by a union bound over all $\log \log K$ rounds it is still negligible.

Thus, assuming that $B_t < K_{t-1}/5$ and $K_t > K_{t-1}/2$, a set chosen uniformly at random is ``bad'' with probability $B_t/K_t \le 2/5$.

Therefore, a good set is chosen with probability at least $3/5 - o(1)$ and thus by repeating $O(\log(1/\delta))$ times and choosing the median of the values $f(\vec{x}_S)$, we have that with probability $1 - \delta/2$, $\frac{f(\vec{x}_{\Truth})}{f(\vec{x}_S)} \in \left[ 1 - \eps, \frac{1}{1 - \eps} \right]$.

The total number times the certification scheme is called is $O(\log(1/\delta) ) \sum_t O(2^{-t} K \log 10^{t}) = O( K \log(1/\delta) )$.

Thus, the verification complexity of the weak correction scheme is equal to $O(q(n,\eps) \log(1/\delta))$ and the Theorem follows.
 
\section{Missing Proof of Section \ref{sec:scorr}} \label{sec:app:scorr}

\begin{proof}[Proof of Lemma \ref{lem:sumC}]
  We let the random variable $M$ to be the total number of verifications until we found $k$ valid records and let $\mathcal{M}$ be the set of samples that we observed. Also we define 
$Z = \abs{\mathcal{M} \cap \mathcal{T}}/M = k/M$. We claim that
\[ \Prob\left(\frac{\sum_{i \in \Truth} x_i}{\sum_{i \in \Workers} x_i} \in \left[ 1 - \eps, \frac{1}{1 - \eps} \right] \cdot Z \right) \ge 1 - \delta \]

\noindent given that $\abs{\mathcal{M} \cap \mathcal{T}} \ge k$.

  Let $q = \frac{\sum_{i \in \Truth} x_i}{\sum_{i \in \Workers} x_i}$. For tha sake of contradiction let $Z > \frac{1}{1 - \eps} \frac{\sum_{i \in \Truth} x_i}{\sum_{i \in \Workers} x_i}$ then 
$M < (1 - \eps) k / q$. Hence the expected number of valid records if we draw $M$ samples according to the described distribution is at most $(1 - \eps) k$. But know using simple Chernoff 
bounds and the fact that $k \ge \frac{1}{\eps^2}\log (2/\delta)$ we get that with probability at most $\delta / 2$ the number of valid records found is at least $k$. 

  Similarly we can show that if $Z < (1 - \eps) \frac{\sum_{i \in \Truth} x_i}{\sum_{i \in \Workers} x_i}$ then with probability at most $\delta / 2$ the number of valid records found is at most $k$. Hence we have
\begin{align*}
  \Prob(\abs{\mathcal{M} \cap \mathcal{T}} = k) & = \Prob\left(\abs{\mathcal{M} \cap \mathcal{T}} = k \mid q \in \left[ 1 - \eps, \frac{1}{1 - \eps} \right] \cdot Z \right) \Prob\left( q \in \left[ 1 - \eps, \frac{1}{1 - \eps} \right] \cdot Z \right) \\
               & + \Prob\left(\abs{\mathcal{M} \cap \mathcal{T}} = k \mid q < (1 - \eps) Z \right) \Prob\left( q < (1 - \eps) Z \right) \\
               & + \Prob\left(\abs{\mathcal{M} \cap \mathcal{T}} = k \mid q > \frac{1}{1 - \eps} Z \right) \Prob\left( q > \frac{1}{1 - \eps} Z \right)
\end{align*}
\noindent but from the definition of the strong correction scheme $\Prob(\abs{\mathcal{M} \cap \mathcal{T}} = k) = 1$ and as we proved 
\[ \Prob\left(\abs{\mathcal{M} \cap \mathcal{T}} = k \mid q < (1 - \eps) Z \right) \le \delta / 2 ~~ \text{ and } \]
\[ \Prob\left(\abs{\mathcal{M} \cap \mathcal{T}} = k \mid q > \frac{1}{1 - \eps} Z \right) \le \delta / 2 \]
\noindent therefore
\begin{align*}
  \Prob(\abs{\mathcal{M} \cap \mathcal{T}} = k) = 1 & \le \Prob\left(\abs{\mathcal{M} \cap \mathcal{T}} = k \mid q \in \left[ 1 - \eps, \frac{1}{1 - \eps} \right] \cdot Z \right) \Prob\left( q \in \left[ 1 - \eps, \frac{1}{1 - \eps} \right] \cdot Z \right) + \delta \\
\end{align*}
\noindent which implies
\[ \Prob\left( q \in \left[ 1 - \eps, \frac{1}{1 - \eps} \right] \cdot Z \right) \ge 1 - \delta. \]
  This finally implies that our estimator is in the correct range
\[ \Prob\left( Z \cdot \sum_{i \in \Workers} x_i \in \left[ 1 - \eps, \frac{1}{1 - \eps} \right] \cdot \sum_{i \in \Truth} x_i \right) \ge 1 - \delta. \]

  To see that $\Theta\left(\frac{1}{\eps^2}\log (1/\delta)\right)$ are also necessary let $x_1 = \cdots = x_n = 1 / n$ and let 
$\abs{\Truth} = \abs{\Workers} q$ where $q = 1/2$. This instance is identical with estimating the
bias of a Bernoulli random variable with error at most $\eps$ and since all the $x_i$'s are equal we can assume without loss 
of generality that at each step we take a uniform sample from $\Workers$. But it is well known that for
estimating a Bernoulli random variable within $\eps$ with probability of failure at most $\delta$ we need at least 
$\Theta\left(\frac{1}{\eps^2}\log (1/\delta)\right)$ total samples. Half of those samples are expected to be correct samples
and hence the verification complexity for any strong correction scheme is also at least 
$\Theta\left(\frac{1}{\eps^2}\log (1/\delta)\right)$.
\end{proof}

\begin{proof}[Proof of Lemma \ref{lem:maxOfSumC}]
  We consider a partition $\mathcal{J}$ of $\Workers$ into $n/c^2$ sets of size $c^2$ each with $x_i = 1$ for all $i \in \Workers$. Let $S$ be a certification scheme that verifies less than $n / 4 c^2$ records. Then there 
exists a set $A \in \mathcal{J}$ such that
\[ \Prob( S \text{ verifies some } j \in A ) < 1/4. \]
\noindent We prove this by contradiction. Let $\Prob( C \text{ verifies some } j \in A ) \ge 1/4$ for all $A \in \mathcal{J}$. Then 
$\Exp[\text{verification by } C] \ge \sum_{A \in \mathcal{J}} \Prob( C \text{ verifies some } j \in A ) \ge n / 4 c^2$ and hence we have a contradiction on the assumption that $S$ verifies less than $n / 4 c^2$ records. Let $\hat{s}$
be the output estimator of $S$ then we have that
\begin{align*}
  \Prob\left(\hat{s} \in \left[\frac{1}{c}, c\right] \cdot f(\xt)\right) & = \Prob\left(\hat{s} \in \left[\frac{1}{c}, c\right] \cdot f(\xt) \mid  S \text{ verifies some } j \in A\right) \Prob\left( S \text{ verifies some } j \in A\right) \\
                                                                         & + \Prob\left(\hat{s} \in \left[\frac{1}{c}, c\right] \cdot f(\xt) \mid  S \text{ does not verify } A\right) \Prob\left( S \text{ does not verify } A\right) \\
                                                                         & < 1/4 + \Prob\left(\hat{s} \in \left[\frac{1}{c}, c\right] \cdot f(\xt) \mid  S \text{ does not verify } A\right)
\end{align*}
\noindent Now if we fix $Q \subseteq \nats$, we observe that the quantity $\Prob\left(\hat{s} \in Q \mid  S \text{ does not verify } A\right)$ does not depend on $\Truth \cap A$ since we are conditioning on the event that $S$
does not verify any record in $A$. Now let $j_B$ be an arbitrary record from the set $B \in \mathcal{J}$. We consider the following two possibilities for the set $\Truth$.
\[ \Truth_0 = \bigcup_{B \in \mathcal{J}, B \neq A} \{j_B\} \]
\[ \Truth_1 = \Truth_0 \cup A \]
\noindent We observe now that if $\Truth = \Truth_0$ then $f(\xt) = 1$ and if $\Truth = \Truth_1$ then $f(\xt) = c^2$. Now since $\hat{s}$ does not depend on $\Truth \cap A$ given that $S \text{ does not verify } A$ we have that we 
can change $\Truth$ between $\Truth_0$ and $\Truth_1$ without changing the quantity $\Prob\left(\hat{s} \in Q \mid  S \text{ does not verify } A\right)$. Now 
\begin{itemize}
  \item[-] if $\Prob\left(\hat{s} \in [1, c] \mid  S \text{ does not verify } A\right) < 1/2$ then we set $\Truth = \Truth_1$ and 
  \item[-] if $\Prob\left(\hat{s} \in [c(c - 1), c^2] \mid  S \text{ does not verify } A\right) < 1/2$ then we set $\Truth = \Truth_1$. 
\end{itemize}

\noindent Observe that one of the two cases has to be true. In any of these we get that 
\[ \Prob\left(\hat{s} \in \left[\frac{1}{c}, c\right] \cdot f(\xt) \mid  S \text{ does not verify } A\right) < 1/2. \]

\noindent Hence we get that  
\[ \Prob\left(\hat{s} \in \left[\frac{1}{c}, c\right] \cdot f(\xt)\right) < 1/4 + \Prob\left(\hat{s} \in \left[\frac{1}{c}, c\right] \cdot f(\xt) \mid  S \text{ does not verify } A\right) < 3/4 \]
\noindent and therefore $S$ has to verify at least $n / 4 c^2$ records.
\end{proof}

\section{Proof of Theorem \ref{thm:wcont}} \label{sec:app:proofOfLipschitz}

 \begin{proof}
      We set $p_i = \frac{2w_i}{3 f(\xw) \eps}$ and we show that 
    those values satisfy the LP \eqref{eq:neccLP1}. Thus, if we choose to verify record $i$ with probability $\min\{2p_i, 1\}$, we 
    get a valid $(\varepsilon,\delta)$-certification scheme.

    For any subset $S \subseteq \Workers$ by the  $\vec{w}$-Lipschitz property we get that
    \[ |f(\xw) - f(\vec{x}_{\Workers\setminus S})| \le \sum_{i \in S} w_i \Leftrightarrow \left| \frac{2}{3 \eps} - \frac{2f(\vec{x}_{\Workers\setminus S})}{3 \eps f(\xw)} \right| \le \sum_{i \in S} \frac{2w_i}{3 f(\xw) \eps}. \]

    \noindent Now if $\frac{f(\xw)}{f(\vec{x}_{\Workers\setminus S})} > \frac{1}{1 - \eps}$, we have
    $ \left| \frac{2}{3 \eps} - \frac{2f(\vec{x}_{\Workers\setminus S})}{3 \eps f(\xw)} \right| > \frac 2 3. $

    \noindent Also if $\frac{f(\xw)}{f(\vec{x}_{\Workers\setminus S})} < 1 - \eps$, we have
    $ \left| \frac{2}{3 \eps} - \frac{2f(\vec{x}_{\Workers\setminus S})}{3 \eps f(\xw)} \right| >
    \frac{2}{3 \eps (1-\eps)} - \frac{2}{3 \eps}  \ge \frac 2 3. $
    
    \noindent Therefore when $\frac{f(\xw)}{f(\vec{x}_{\Workers\setminus S})} \notin \left[1 - \eps, \frac{1}{1 - \eps}\right]$, we have
    $\sum_{i \in S} p_i = \sum_{i \in S} \frac{2w_i}{3 f(\xw) \eps} \ge \left| \frac{2}{3 \eps} - \frac{2f(\vec{x}_{\Workers\setminus S})}{3 \eps f(\xw)} \right| \ge \frac{2}{3}$.

    \noindent This means that LP \eqref{eq:neccLP1} is satisfied. 
    Now we can apply Theorem \ref{thm:optInstV} and we conclude that the certification scheme that verifies each record independently 
    with probability $\min\{2p_i, 1\}$, where $2p_i = \frac{4w_i}{3 f(\xw) \eps}$, verifies at most 
    $\frac{4\sum_{i \in \Workers} w_i}{3 f(\xw) \eps}$ records and has probability of success at least $2/3$. In order to get 
    probability of success $\delta$ we instead verify each record $i$ with probability $2 p_i \log(1/\delta)$ and the theorem follows.
  \end{proof}

\section{Complete Statement and applications of Theorem \ref{thm:sCorrection}} \label{sec:app:applicationsStrongCorrection}

  More precisely we are given an input $\xw = (x_1, x_2 , \cdots , x_n)$ of length $n$, where every $x_i$ belongs in some set 
$\Domain$. In this section, we will fix $\Domain = [\mathcal{D}]^d$ for some $\mathcal{D} = n^{O(1)}$ to be the discretized 
$d$-dimensional Euclidean space. Our goal is to compute the value of a symmetric function 
$f : \Domain^n \rightarrow \mathbb{R}_+$ with input $\vec x \in \Domain^n$. We assume that all $x_i$ are distinct and define 
$\Support \subseteq \Domain$ as the set $\Support = \{x_i : i \in \Workers\}$. Since we consider symmetric functions $f$, it 
is convenient to extend the definition of $f$ to sets $f(\Support) = f(x)$.

  The \emph{conditional sampling model} allows such queries of small description complexity to be performed. In particular, 
the algorithm is given access to an oracle $\Cond(C)$ that takes as input a function $C: \Domain \rightarrow \{0,1\}$ and 
returns a tuple $(i, x_i)$ with $C(x_i) = 1$ with $i$ chosen uniformly at random from the subset 
$\{ j \in [n] \mid C(x_j) = 1 \}$. If no such tuple exists the oracle returns $\bot$.

  The main result of this section is a reduction from any algorithm that uses conditional sampling to a strong correction
scheme.

\begin{theorem}
    An algorithm that uses $k$ conditional samples to compute a function $f$ can produce a strong correction scheme with 
  verification cost $k$.
\end{theorem}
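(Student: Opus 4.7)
The plan is to simulate the conditional-sampling algorithm $A$ by rejection sampling through $\Workers$, paying the verification cost only when a valid record is discovered. Concretely, I would construct a wrapper that maintains a ``surviving'' pool $\Workers'$, initialized to $\Workers$, and satisfies the invariant $\Truth \subseteq \Workers'$ throughout the execution. To simulate $A$'s query $\Cond(C_t)$, the wrapper repeatedly draws $j$ uniformly from $\{ j \in \Workers' : C_t(x_j) = 1\}$ and verifies it; if $j$ is valid, it is returned to $A$ as the answer to the query and we move on; if $j$ is invalid, it is removed from $\Workers'$ and the loop retries. If the set $\{j \in \Workers' : C_t(x_j) = 1\}$ ever becomes empty before a valid record is found, the invariant $\Truth \subseteq \Workers'$ guarantees that $\{ j \in \Truth : C_t(x_j) = 1\}$ is also empty, so the wrapper faithfully returns $\bot$ to $A$.

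Next, I would argue distributional correctness. Fix the query $C_t$ and condition on the current $\Workers'$. By the invariant, $\Workers' \cap \Truth = \Truth$. Since sampling is done uniformly from $\{ j \in \Workers' : C_t(x_j) = 1\}$ and the loop terminates precisely at the first valid $j$ encountered, the returned index is, by symmetry of uniform rejection sampling, uniformly distributed over $\{ j \in \Truth : C_t(x_j) = 1\}$. Crucially, this distribution does not depend on the identity or count of the invalid records removed along the way, so the samples produced by successive queries are conditionally indistinguishable from genuine $\Cond_{\Truth}$ samples given the transcript. Therefore the wrapper's view of $A$'s output is identical in distribution to $A$ running against a true conditional-sampling oracle on $\Truth$, and hence it computes $f(\xt)$ with the same success guarantee.

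Finally, I would bound the verification cost under the strong-correction model. Every iteration of the inner loop performs one verification, which is either of a valid record (in which case the loop exits) or of an invalid record (which is then removed from $\Workers'$). Invalid-record verifications are free in the strong correction model, so they contribute nothing to the cost. Each of the $k$ simulated queries therefore contributes at most one paid verification, giving a total verification cost of at most $k$. Termination is not an issue: the number of invalid verifications across the entire execution is bounded by $|\Workers \setminus \Truth|$, since each one strictly shrinks $\Workers' \setminus \Truth$.

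The only delicate step is the distributional argument, since one must confirm that interleaving the removal of invalid records with successive calls to $\Cond$ does not bias the joint distribution of the answers. This follows from the invariant $\Truth \subseteq \Workers'$ combined with the elementary fact that rejection sampling preserves the uniform distribution on the accepting set; nothing deeper is required.
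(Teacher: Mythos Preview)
Your proposal is correct and follows essentially the same approach as the paper: both implement each $\Cond$ query by rejection sampling over the records satisfying the predicate, verifying candidates until a valid one is found, so that invalid verifications are free and at most one paid verification is charged per query. Your write-up is in fact more careful than the paper's---you explicitly maintain the invariant $\Truth \subseteq \Workers'$, argue joint distributional correctness across queries, and note termination---whereas the paper simply describes a random shuffle followed by a linear scan and asserts the result.
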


\begin{proof}
    We will show how we can implement one conditional sample using only one verification. We take all the values of the
  records $x_1, \dots, x_n$ and we randomly shuffle them to get $x_{\pi_1}, \dots, x_{\pi_n}$. Then we take one by one 
  the records $x_{\pi_i}$ with this new order and we check if $C(x_{\pi_i}) = 1$. If yes then we verify $x_{\pi_i}$ and
  if it is valid we return it as the result of the conditional sampling oracle. If $\pi_i$ is invalid then we just ignore
  this records without any cost and we proceed with the next record. If we finish the records and we found no valid record 
  $x_{\pi_j}$ such that $C(x_{\pi_j}) = 1$, then we return $\bot$. It is easy to see that this procedure produces at every 
  step a verified conditional sample. Since the conditional sampling algorithm has only this access to the data we get that
  any guarantees of the conditional sampling immediately transfer to this corresponding strong correction scheme.
\end{proof}

  The above result gives a general framework for designing strong correction schemes for several computational and learning 
problems. We give some of these examples below that are based on the work of \cite{GouleakisTZ2017}. For other distributional 
learning tasks, one can use the conditional sampling algorithms of \cite{CanonneRS14} to get efficient strong correction 
schemes. Some applications of Theorem \ref{thm:sCorrection} can be found in Appendix \ref{sec:app:applicationsStrongCorrection}.
\paragraph{$k$-means Clustering}
  Let $\Domain$ be a metric space with distance metric $d : \Domain \times \Domain \rightarrow \reals$, i.e. $d(x, y)$ 
represents the distance between $x$ and $y$. Given a set of \textit{centers} $Ct$ we define the distance of a point $x$ from 
$Ct$ to be $d(x, Ct) = \min_{c \in Ct} d(x, c)$. Now given a set of $n$ input points $\Support \subseteq \Domain$ and a set of 
centers $Ct \subseteq \Omega$ we define the cost of $Ct$ for $\Support$ to be 
$d(\Support, Ct) = \sum_{x \in \Support} d(x, Ct)$. The $k$-means problem is the problem of minimizing the 
\textit{squared cost} $d^2(\Support, Ct) = \sum_{x \in \Support} d^2(x, Ct)$ over the choice of centers $Ct$ subject to the 
constraint $|Ct| = k$. We assume that the diameter of the metric space is $\Delta = \max_{x, y \in \Support} d(x, y)$. In this 
setting we assume that the records contain the points in the $d$-dimensional metric space.

\begin{corollary} \label{cor:kmeans}
    Let $x_1, x_2, \dots, x_n$ be the points in the $d$-dimensional metric space $\Domain$ stored in the records $\Workers$
  and $f(\xw)$ be the optimal $k$-means clustering of the points $\xw$. There exists a strong correction scheme with 
  $\tilde{O}(k^2 \log n \log ( k / \delta ))$ verifications that guarantees a constant approximation of the value optimal 
  clustering, with probability of failure at most $\delta$.
\end{corollary}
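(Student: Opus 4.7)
The plan is to obtain this result as a direct consequence of Theorem~\ref{thm:sCorrection}, which converts any algorithm that uses $k$ conditional samples into a strong correction scheme of verification cost $k$. So the only substantive task is to exhibit a constant-factor approximation algorithm for $k$-means in the conditional sampling model with the claimed sample complexity.

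First I would invoke the conditional-sampling $k$-means algorithm of \cite{GouleakisTZ2017}, which (as is standard in that framework) works on the set $\Support = \{x_1,\dots,x_n\} \subseteq \Domain$ accessed only through the oracle $\Cond$. The high-level idea of that algorithm is to use conditional queries to implement a successive-sampling / $D^2$-sampling style procedure: one maintains a growing set of candidate centers, and at each step uses a condition $C$ that describes ``points that are currently far from the chosen centers'' (which in the discretized $d$-dimensional space has small description complexity) to draw roughly uniform samples from the uncovered mass. This lets one implement an approximation algorithm in the spirit of $k$-means$++$ while paying only polylogarithmic sample overhead per center. The resulting bound from \cite{GouleakisTZ2017} on the number of conditional samples for a constant-factor approximation is exactly $\tilde O(k^2 \log n \log(k/\delta))$, with failure probability at most $\delta$.

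Given that, the derivation of the corollary is immediate: by Theorem~\ref{thm:sCorrection}, every conditional sample can be simulated using at most one verification of a valid record (and any number of ``free'' rejections of invalid records, which are not counted in the strong correction budget). Running the conditional-sampling algorithm inside this simulation therefore yields a strong correction scheme whose verification cost matches the sample complexity, namely $\tilde O(k^2 \log n \log(k/\delta))$, and whose approximation guarantee and failure probability are inherited directly from the underlying algorithm.

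The main (and essentially only) obstacle is the black-box step: one must check that the conditional queries used by the $k$-means approximation algorithm of \cite{GouleakisTZ2017} really are of the form $C:\Domain\to\{0,1\}$ required by Theorem~\ref{thm:sCorrection} (i.e., they depend only on the point values $x_i$ and not on the identities of validated records), so that the simulation in the proof of Theorem~\ref{thm:sCorrection} applies verbatim. Since the queries in that algorithm are geometric predicates on $[\mathcal D]^d$ (balls around previously chosen centers), this is indeed the case, and no additional argument is needed.
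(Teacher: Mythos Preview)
Your proposal is correct and matches the paper's own proof, which simply cites Theorem~\ref{thm:sCorrection} together with Theorem~2 of \cite{GouleakisTZ2017}. Your additional remarks about the form of the conditional queries and the $D^2$-sampling intuition are extra detail but not a different approach.
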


  The proof of this corollary is based on the Theorem \ref{thm:sCorrection} and the Theorem 2 from \cite{GouleakisTZ2017}.

\paragraph{Euclidean Minimum Spanning Tree}
  Given a set of points $\xw$ in $\reals^d$, the minimum spanning tree problem in Euclidean space ask to compute the a spanning
tree $T$ on the points minimizing the sum of weights of the edges. The weight of an edge between two points is equal to their 
Euclidean distance. We will focus on a simpler variant of the problem which is to compute just the weight of the best possible 
spanning tree, i.e. estimate the quantity
$\min_{\text{tree } T} \sum_{(x,x') \in T} \|x - x'\|_2$.

\begin{corollary} \label{cor:mst}
    Let $x_1, x_2, \dots, x_n$ be the points in $\reals^d$ stored in the records $\Workers$ and 
  $f(\xw) = \min_{\text{tree } T} \sum_{(x,x') \in T} \|x - x'\|_2$. There exists a strong correction scheme with 
  $\tilde O(d^3 \log^4 n / \eps^7 )\cdot \log(1/\delta)$ verifications that guarantees an $(1 + \eps)$-approximation of
  the weight of the minimum spanning tree, with probability of failure at most $\delta$.
\end{corollary}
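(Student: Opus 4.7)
\medskip

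\noindent \textbf{Proof plan for Corollary \ref{cor:mst}.} My plan is to apply Theorem~\ref{thm:sCorrection} as a black box, using as the underlying conditional-sampling algorithm the sublinear Euclidean MST estimator from \cite{GouleakisTZ2017}. Concretely, the plan is: (i) invoke the known conditional-sample algorithm that outputs a $(1+\eps)$-approximation to $\sum_{(x,x')\in T}\|x-x'\|_2$ with constant success probability using $\tilde O(d^3 \log^4 n / \eps^7)$ conditional samples on the point set $\Support = \{x_i : i \in \Workers\}$; (ii) translate each conditional sample into exactly one verification using the reduction in the proof of Theorem~\ref{thm:sCorrection}; (iii) boost confidence to $1-\delta$ by independent repetition plus median-of-means. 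The output of the correction scheme is the median of $O(\log(1/\delta))$ independent estimates, which yields the stated verification budget $\tilde O(d^3 \log^4 n / \eps^7)\cdot \log(1/\delta)$.

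\medskip

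\noindent The steps in order: First I would specify the conditional-sampling oracle $\Cond(\cdot)$ as defined in the excerpt, with the domain $\Domain=[\mathcal{D}]^d$ being the discretized Euclidean space of polynomially bounded precision (which is where the algorithm of \cite{GouleakisTZ2017} operates). Next I would recall from that reference that their sublinear MST algorithm uses conditional queries with small-description filter functions $C$ (in their case essentially membership in axis-aligned boxes at a logarithmic grid of scales) and, with constant success probability, returns an estimate $\hat m$ with $\hat m \in [1-\eps,1+\eps]\cdot f(\vec{x}_{\Support})$ in $\tilde O(d^3\log^4 n/\eps^7)$ queries. Then I would apply Theorem~\ref{thm:sCorrection}: each $\Cond(C)$ call is simulated by scanning records in a uniformly random permutation, verifying the first record $x_{\pi_i}$ whose value satisfies $C(x_{\pi_i})=1$, and returning it if it is valid (otherwise continuing); invalid records encountered along the way are free under the strong correction accounting. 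The output of each simulated oracle call is therefore a uniformly random \emph{valid} record satisfying $C$, so the guarantees of the conditional-sampling algorithm transfer verbatim to the true input set $\vec{x}_{\Truth}$, and the total verification cost equals the number of conditional samples used.

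\medskip

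\noindent Finally, to reach error $\delta$, I would run $r=O(\log(1/\delta))$ independent copies of the above strong correction procedure and output the median of the $r$ estimates; a standard Chernoff argument guarantees that the median is within $(1\pm\eps)$ of $f(\vec{x}_{\Truth})$ with probability at least $1-\delta$, at the cost of the extra $\log(1/\delta)$ factor. The main obstacle is not really algorithmic, since Theorem~\ref{thm:sCorrection} already does the heavy lifting: the only subtlety is checking that the simulation in Theorem~\ref{thm:sCorrection} preserves \emph{independence} across the $k$ conditional queries made by the MST algorithm. Independence holds because the random permutation $\pi$ and the verifications revealed are functions only of the filter $C$ being asked at each step, and conditioned on the returned valid record the distribution is uniform on $\{x_j\in\Truth : C(x_j)=1\}$ exactly as the conditional-sampling model requires. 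No new combinatorial work beyond \cite{GouleakisTZ2017} is needed.
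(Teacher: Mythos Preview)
Your proposal is correct and matches the paper's approach exactly: the paper's proof of Corollary~\ref{cor:mst} is a one-line invocation of Theorem~\ref{thm:sCorrection} together with Theorem~3 of \cite{GouleakisTZ2017}, which is precisely the black-box reduction you describe. Your additional remarks on boosting via the median and on independence of the simulated conditional queries are sound elaborations, but no further argument beyond the two cited results is needed.
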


  The proof of this corollary is based on the Theorem \ref{thm:sCorrection} and the Theorem 3 from \cite{GouleakisTZ2017}.

\textbf{Remark.} Observe that the value of the MST gives a 2-approximation of the metric TSP and the metric Steiner Tree 
problems. Hence Corollary \ref{cor:mst} implies efficient strong correction schemes that achieve constant approximation for 
those problems as well.
 
\end{document}